\tikzset{initial text={}}
\tikzstyle{none}=[inner sep=0pt]
\tikzstyle{gate}=[rectangle,draw=black, inner sep=5pt,outer sep=1pt]
\tikzstyle{and}=[gate]
\tikzstyle{or}=[gate]
\tikzstyle{seq}=[gate]
\tikzstyle{neg}=[gate]
\tikzstyle{fail}=[gate]
\tikzstyle{cost}=[gate, minimum height=2.5em, rectangle split, rectangle split parts=2, rectangle split draw splits=false, inner sep=3pt]
\tikzstyle{trigger}=[gate]
\tikzstyle{costguard}=[above,font=\scriptsize]
\tikzstyle{smallgate}=[gate,inner sep=4pt]
\tikzstyle{urgent}=[font={\scriptsize{\sf \textcolor{Dimgray}{U}}}, inner sep=0pt]
\tikzstyle{state-circle}=[circle,draw=black,inner sep=2pt,minimum size=0.35cm,outer sep=2pt]
\tikzstyle{state}=[ellipse,draw=black,inner sep=2pt,outer sep=2pt]
\tikzstyle{state-box}=[rectangle,draw=black,inner sep=2pt,minimum size=0.35cm,outer sep=2pt]
\tikzstyle{probability-state}=[circle,fill=black,draw=black,inner sep=0pt,minimum size=1pt]
\tikzstyle{BETrig}=[regular polygon,regular polygon sides=3,draw=black, inner sep=0pt, outer sep=1pt,minimum width=0.85cm]
\tikzstyle{reset}=[rectangle,draw=black, inner sep=5pt]
\tikzstyle{nodename}=[text width=1.8cm, align=center]
\tikzstyle{bename}=[nodename, below=0.3]
\tikzstyle{gatename}=[nodename, above=0.3]
\tikzstyle{costname}=[nodename, above=0.4]
\tikzstyle{update}=[font=\scriptsize]
\tikzstyle{false}=[color={ACMRed!70!black}]
\tikzstyle{true}=[color={ACMGreen!70!black}]
\tikzstyle{undefined}=[color={black!70}]
\tikzstyle{tt}=[defender]
\tikzstyle{ff}=[attacker]
\tikzstyle{uu}=[neutral]
\tikzstyle{attacker}=[fill={ACMRed!30},postaction={pattern=horizontal lines,pattern color=ACMRed!60}]
\tikzstyle{defender}=[fill={ACMGreen!50},postaction={pattern=vertical lines,pattern color=ACMGreen!80}]
\tikzstyle{neutral}=[fill={gray!30},postaction={pattern=crosshatch dots,pattern color=gray!80},font=\bf]
\tikzstyle{nothing}=[]
\tikzstyle{arrow-urgent}=[-latex,draw=black]
\tikzstyle{arrow-non-urgent}=[-latex,draw=black,dashed]
\tikzstyle{arrow-simple}=[-latex,draw=black]
\tikzstyle{arrow-reset}=[-latex,dashed]
\tikzstyle{arrow-trigger}=[-latex,decoration={snake,segment length=4,amplitude=.6, post=lineto,post length=2pt},decorate]
\tikzstyle{arrow-segment-after-probabilities}=[-latex,draw=black]
\tikzstyle{markovian}=[->,draw=black,postaction={decorate},decoration={markings,mark=at position .5 with {\arrow{>}}}]
\tikzstyle{arrow-segment-before-probabilities-immediate}=[-,draw=black]
\tikzstyle{arrow-segment-before-probabilities-timed}=[-,draw=black]
\tikzstyle{stateText}=[rectangle,fill=White,draw=black, inner sep=5pt, minimum height=0pt]
\tikzstyle{non-urgent}=[densely dashed]
\newcommand*{\fancyrefapplabelprefix}{app}
  \providecommand*{\frefappname}{appendix}%
  \providecommand*{\Frefappname}{Appendix}%
\newcommand*{\fancyreflinelabelprefix}{line}
  \providecommand*{\freflinename}{line}%
  \providecommand*{\Freflinename}{Line}%
\newcommand*{\fancyrefexlabelprefix}{ex}
  \providecommand*{\frefexname}{example}%
  \providecommand*{\Frefexname}{Example}%
\newcommand*{\fancyrefdeflabelprefix}{def}
  \providecommand*{\frefdefname}{definition}%
  \providecommand*{\Frefdefname}{Definition}%
\newcommand*{\fancyreftheolabelprefix}{theo}
  \providecommand*{\freftheoname}{theorem}%
  \providecommand*{\Freftheoname}{Theorem}%
\newcommand*{\fancyrefalglabelprefix}{alg}
  \providecommand*{\frefalgname}{algorithm}%
  \providecommand*{\Frefalgname}{Algorithm}%
\newcommand*{\fancyreflemlabelprefix}{lem}
  \providecommand*{\freflemname}{lemma}%
  \providecommand*{\Freflemname}{Lemma}%
\newcommand*{\fancyrefcorlabelprefix}{cor}
  \providecommand*{\frefcorname}{corollary}%
  \providecommand*{\Frefcorname}{Corollary}%
\newcommand*{\fancyrefsubseclabelprefix}{subsec}
  \providecommand*{\frefsubsecname}{subsection}%
  \providecommand*{\Frefsubsecname}{Subsection}%
\newcommand*{\fancyrefreviewlabelprefix}{rev}
	\providecommand*{\frefreviewname}{review}%
	\providecommand*{\Frefreviewname}{Review}%
\renewcommand{\phi}{\upvarphi}
\renewcommand{\psi}{\uppsi}
\renewcommand{\rho}{\upvarrho}
\renewcommand{\upsilon}{\upupsilon}
\renewcommand{\chi}{\upchi}
\renewcommand{\max}{\mathop{\mathsf{max}}}
\renewcommand{\min}{\mathop{\mathsf{min}}}
 \newcommand{\formattool}[1]{\textsf{#1}}
\newcommand{\formatstandard}[1]{\textsf{#1}}
\newcommand{\formatnames}[1]{\textsl{#1}}
\newcommand{\adt}{\formatnames{ADT}}
\newcommand{\PAC}{\formatnames{PAC}}
\newcommand{\DOT}{\formatstandard{DOT}}
\newcommand{\XML}{\formatstandard{XML}}
\newcommand{\UPPAAL}{\formattool{UPPAAL}}
\newcommand{\prism}{\formattool{PRISM}}
\newcommand{\quadtool}{\formattool{QuADTool}}
\newcommand{\modest}{\formattool{MODEST}}
\newcommand{\storm}{\formattool{STORM}}
\newcommand{\prismgames}{\formattool{PRISM-games}}
\newcommand{\jani}{\formatstandard{JANI}}
\newcommand{\adtool}{\formattool{ADTool}}
\newcommand{\atbest}{\formatstandard{ATBEST}}
\newcommand{\atbestURL}{\url{https://www.model.in.tum.de/~kraemerj/upload/}}
\newcommand{\risqflan}{\formattool{RisQFLan}}
\newcommand{\attop}{\formattool{ATTop}}
\newcommand{\MultiVeStA}{\formattool{MultiVeStA}}
\newcommand{\orcidID}[1]{{\href{https://orcid.org/#1}{\protect\raisebox{3.25pt}{\protect\includegraphics{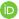}}}}}
\newcommand{\julia}[1]{}
\newcommand{\jan}[1]{}
\newcommand{\katharina}[1]{}
\newenvironment{lemma*}[1]{\medskip\noindent\textbf{Recap of Lemma\ifthenelse{\isempty{#1}}{}{~#1}.}}{}
\newenvironment{theorem*}[1]{\medskip\noindent\textbf{Recap of Theorem\ifthenelse{\isempty{#1}}{}{~#1}.}}{}
\newcommand{\true}{\mathsf{tt}}
\newcommand{\false}{\mathsf{ff}}
\DeclareDocumentCommand{\variable}{O{} D<>{}}{\mathsf{v}_{#1}^{#2}}
\DeclareDocumentCommand{\variables}{O{} D<>{}}{\mathcal{V}_{#1}^{#2}}
\newcommand{\set}[1]{\{#1\}}
\newcommand{\powerset}[1]{2^{#1}}
\newcommand{\disjointUnion}{\mathop{\dot{\cup}}}
\newcommand{\RR}{\mathbb{R}}
\DeclareDocumentCommand{\anySet}{O{} D<>{}}{\mathcal{X}_{#1}^{#2}}
\DeclareDocumentCommand{\alg}{O{} D<>{} D(){}}{\mathcal{A}_{#1}^{#2}\ifthenelse{\isempty{#3}}{}{(#3)}}
\DeclareDocumentCommand{\distribution}{O{} D<>{} D(){}}{\mu_{#1}^{#2}\ifthenelse{\isempty{#3}}{}{(#3)}}
\DeclareDocumentCommand{\probability}{O{} D<>{} D(){}}{\mathbb{P}_{#1}^{#2}\ifthenelse{\isempty{#3}}{}{\big(#3\big)}}
\DeclareDocumentCommand{\expectation}{O{} D<>{} D(){}}{\mathbb{E}_{#1}^{#2}\ifthenelse{\isempty{#3}}{}{[#3]}}
\DeclareDocumentCommand{\support}{O{} D<>{} D(){}}{\mathsf{Supp}_{#1}^{#2}\ifthenelse{\isempty{#3}}{}{(#3)}}
\DeclareDocumentCommand{\cylinder}{O{} D<>{} D(){}}{\mathsf{Cyl}_{#1}^{#2}\ifthenelse{\isempty{#3}}{}{\big(#3\big)}}
\DeclareDocumentCommand{\algebra}{O{} D<>{} D(){}}{\mathcal{F}_{#1}^{#2}\ifthenelse{\isempty{#3}}{}{\big(#3\big)}}
\DeclareDocumentCommand{\probspace}{O{} D<>{}}{\big(\Omega_{#1}^{#2},\mathcal{F}_{#1}^{#2},\probability\big)}
\DeclareDocumentCommand{\density}{O{} D<>{} D(){}}{\mathsf{f}_{#1}^{#2}\ifthenelse{\isempty{#3}}{}{(#3)}}
\DeclareDocumentCommand{\randomVariable}{O{} D<>{} D(){} D(){}}{\mathsf{C}_{#1}^{#2}\ifthenelse{\isempty{#3}}{}{(#3)}\ifthenelse{\isempty{#4}}{}{(#4)}}
\newcommand{\opformat}[1]{\mathbin{\mathtt{#1}}}
\newcommand{\opAnd}{\opformat{AND}}
\newcommand{\opOr}{\opformat{OR}}
\newcommand{\opSand}{\opformat{SAND}}
\newcommand{\opSor}{\opformat{SOR}}
\newcommand{\opReset}{\opformat{RE}}
\newcommand{\opTrigger}{\opformat{TR}}
\newcommand{\opNeg}{\opformat{NOT}}
\DeclareDocumentCommand{\opVot}{O{k/n}}{\opformat{VOT}(#1)}
\DeclareDocumentCommand{\graph}{O{} D<>{}}{\mathsf{G}_{#1}^{#2}}
\DeclareDocumentCommand{\vertex}{O{} D<>{}}{\mathsf{v}_{#1}^{#2}}
\DeclareDocumentCommand{\othervertex}{O{} D<>{}}{\mathsf{u}_{#1}^{#2}}
\DeclareDocumentCommand{\vertices}{O{} D<>{}}{\mathsf{V}_{#1}^{#2}}
\DeclareDocumentCommand{\edge}{O{} D<>{}}{\mathsf{e}_{#1}^{#2}}
\DeclareDocumentCommand{\edges}{O{} D<>{}}{\mathsf{E}_{#1}^{#2}}
\DeclareDocumentCommand{\inp}{O{} D<>{} D(){}}{\mathsf{in}_{#1}^{#2}\ifthenelse{\isempty{#3}}{}{(#3)}}
\DeclareDocumentCommand{\exampleGraph}{O{} D<>{}}{\graph[#1]<#2>=(\vertices[#1]<#2>, \edges[#1]<#2>)}
\DeclareDocumentCommand{\AFT}{O{} D<>{}}{\mathsf{AFT}_{#1}^{#2}}
\DeclareDocumentCommand{\ADT}{O{} D<>{}}{\mathsf{ADT}_{#1}^{#2}}
\DeclareDocumentCommand{\ADD}{O{} D<>{}}{\mathsf{ADT}_{#1}^{#2}}
\DeclareDocumentCommand{\AT}{O{} D<>{}}{\mathsf{AT}_{#1}^{#2}}
\DeclareDocumentCommand{\topGoal}{O{} D<>{}}{\mathsf{g}_{#1}^{#2}}
\DeclareDocumentCommand{\operator}{O{} D<>{} D(){}}{\mathsf{op}_{#1}^{#2}\ifthenelse{\isempty{#3}}{}{(#3)}}
\DeclareDocumentCommand{\operators}{O{} D<>{} D(){}}{\mathsf{O}_{#1}^{#2}\ifthenelse{\isempty{#3}}{}{(#3)}}
\DeclareDocumentCommand{\basicEvent}{O{} D<>{}}{\mathsf{be}_{#1}^{#2}}
\DeclareDocumentCommand{\basicEvents}{O{} D<>{}}{\mathsf{BE}_{#1}^{#2}}
\DeclareDocumentCommand{\basicComponent}{O{} D<>{}}{\mathsf{bc}_{#1}^{#2}}
\DeclareDocumentCommand{\basicComponents}{O{} D<>{}}{\mathsf{BC}_{#1}^{#2}}
\DeclareDocumentCommand{\basicAttack}{O{} D<>{}}{\mathsf{ba}_{#1}^{#2}}
\DeclareDocumentCommand{\basicAttacks}{O{} D<>{}}{\mathsf{BA}_{#1}^{#2}}
\DeclareDocumentCommand{\basicDefense}{O{} D<>{}}{\mathsf{bd}_{#1}^{#2}}
\DeclareDocumentCommand{\basicDefenses}{O{} D<>{}}{\mathsf{BD}_{#1}^{#2}}
\DeclareDocumentCommand{\probabilityFunction}{O{} D<>{} D(){}}{\mathsf{Pr}_{#1}^{#2}\ifthenelse{\isempty{#3}}{}{(#3)}}
\DeclareDocumentCommand{\delayFunction}{O{} D<>{}  D(){}}{\mathsf{Delay}^{#2}\ifthenelse{\isempty{#3}}{}{(#3)}}
\DeclareDocumentCommand{\costFunction}{O{} D<>{}  D(){}}{\mathsf{Cost}_{E}^{#2}\ifthenelse{\isempty{#3}}{}{(#3)}}
\DeclareDocumentCommand{\distributionFunction}{O{} D<>{} D(){}}{\mathsf{Dist}^{#2}\ifthenelse{\isempty{#3}}{}{(#3)}}
\DeclareDocumentCommand{\costDelayFunction}{O{} D<>{} D(){}}{\mathsf{Cost}_{D}^{#2}\ifthenelse{\isempty{#3}}{}{(#3)}}
\newcommand{\gates}{\mathsf{G}}
\newcommand{\gate}{\mathsf{g}}
\newcommand{\subscript}[2]{$#1 _ #2$}
\DeclareDocumentCommand{\triggerTransitions}{O{} D<>{}}{\mathsf{TEdge}_{#1}^{#2}}
\DeclareDocumentCommand{\triggerBE}{O{} D<>{}}{\basicEvents[\opTrigger\ifthenelse{\isempty{#1}}{}{,#1}]<#2>}
\DeclareDocumentCommand{\resetTransitions}{O{} D<>{}}{\mathsf{REdge}_{#1}^{#2}}
\DeclareDocumentCommand{\resetBE}{O{} D<>{}}{\basicEvents[\opReset\ifthenelse{\isempty{#1}}{}{,#1}]<#2>}
\DeclareDocumentCommand{\typeFunction}{O{} D<>{} D(){}}{\mathsf{t}_{#1}^{#2}\ifthenelse{\isempty{#3}}{}{(#3)}}
\DeclareDocumentCommand{\exampleAFT}{O{} D<>{}}{\AFT[#1]<#2>=(\vertices[#1]<#2>,
  \edges[#1]<#2>, \typeFunction[#1]<#2>,
  \triggerTransitions[#1]<#2>)}
\DeclareDocumentCommand{\exampleADD}{O{} D<>{}}{\ADT[#1]<#2>=(\vertices[#1]<#2>,
  \edges[#1]<#2>, \typeFunction[#1]<#2>)} 
\DeclareDocumentCommand{\exampleAT}{O{} D<>{}}{\AT[#1]<#2>=(\vertices[#1]<#2>,
  \edges[#1]<#2>, \typeFunction[#1]<#2>)} 
\DeclareDocumentCommand{\labellingFunction}{O{} D<>{} D(){}}{\ell_{#1}^{#2}\ifthenelse{\isempty{#3}}{}{(#3)}}
\DeclareDocumentCommand{\type}{O{} D<>{} D(){}}{\mathsf{T}_{#1}^{#2}\ifthenelse{\isempty{#3}}{}{(#3)}}
\DeclareDocumentCommand{\randomVariable}{O{} D<>{} D(){}}{\mathsf{X}_{#1}^{#2}\ifthenelse{\isempty{#3}}{}{(#3)}}
\DeclareDocumentCommand{\val}{O{} D<>{} D(){}}{\mathsf{v}_{#1}^{#2}\ifthenelse{\isempty{#3}}{}{(#3)}}
\DeclareDocumentCommand{\metric}{O{} D<>{} D(){}}{\mathsf{d}_{#1}^{#2}\ifthenelse{\isempty{#3}}{}{(#3)}}
\DeclareDocumentCommand{\error}{O{} D<>{} D(){}}{\epsilon_{#1}^{#2}\ifthenelse{\isempty{#3}}{}{(#3)}}
\DeclareDocumentCommand{\prob}{O{} D<>{} D(){}}{\delta_{#1}^{#2}\ifthenelse{\isempty{#3}}{}{(#3)}}
\DeclareDocumentCommand{\interpretation}{O{} D<>{} D(){}D(){}}{\mathsf{I}_{#1}^{#2}\ifthenelse{\isempty{#3}}{}{(#3)}\ifthenelse{\isempty{#4}}{}{(#4)}}
\DeclareDocumentCommand{\pac}{O{} D<>{} D||{} D(){}}{\mathsf{pac}_{#1}^{#2}\ifthenelse{\isempty{#3}}{}{(#3)}\ifthenelse{\isempty{#4}}{}{(#4)}}
\DeclareDocumentCommand{\values}{O{} D<>{} D(){}}{\mathsf{val}_{#1}^{#2}\ifthenelse{\isempty{#3}}{}{(#3)}}
\DeclareDocumentCommand{\valuesBE}{O{\basicEvents} D<>{} D(){}}{\mathsf{val}_{#1}^{#2}\ifthenelse{\isempty{#3}}{}{(#3)}}
\DeclareDocumentCommand{\vector}{O{} D<>{}  D(){}}{\avector{\alpha}_{#1}^{#2}\ifthenelse{\isempty{#3}}{}{(#3)}}
\DeclareDocumentCommand{\matrix}{O{} D<>{}  D(){}}{\ifstrequal{#1}{1}{\amatrix{A}}{\ifstrequal{#1}{2}{\amatrix{B}}{\amatrix{A}}}_{\ifstrequal{#1}{1}{}{\ifstrequal{#1}{2}{}{#1}}}^{#2}\ifthenelse{\isempty{#3}}{}{(#3)}}
\DeclareDocumentCommand{\aph}{O{} D<>{} D(){}}{\ensuremath{\mathsf{APH}_{#1}^{#2}\ifthenelse{\isempty{#3}}{}{(#3)}}}
\DeclareDocumentCommand{\setAPH}{O{} D<>{} D(){}}{\mathcal{APH}_{#1}^{#2}\ifthenelse{\isempty{#3}}{}{(#3)}}
\DeclareDocumentCommand{\exampleAPH}{O{} D<>{} D(){}}{\aph[#1]<#2> = (\vector[#1]<#2>,\matrix[#1]<#2>)\ifthenelse{\isempty{#3}}{}{(#3)}}
\DeclareDocumentCommand{\norm}{m O{} D<>{}}{\left\lVert#1\right\rVert_{#2}^{#3}}
\DeclareDocumentCommand{\probabilityFunction}{O{} D<>{} D(){}}{\mathsf{Pr}_{#1}^{#2}\ifthenelse{\isempty{#3}}{}{(#3)}}
\DeclareDocumentCommand{\costFunction}{O{} D<>{}  D(){}}{\mathsf{Cost}_{E}^{#2}\ifthenelse{\isempty{#3}}{}{(#3)}}
\DeclareDocumentCommand{\distributionFunction}{O{} D<>{} D(){}}{\mathsf{Dist}^{#2}\ifthenelse{\isempty{#3}}{}{(#3)}}
\DeclareDocumentCommand{\costDelayFunction}{O{} D<>{} D(){}}{\mathsf{Cost}_{D}^{#2}\ifthenelse{\isempty{#3}}{}{(#3)}}
\DeclareDocumentCommand{\cost}{O{} D<>{} D(){}}{\mathsf{C}_{#1}^{#2}\ifthenelse{\isempty{#3}}{}{(#3)}}
\DeclareDocumentCommand{\costVars}{O{} D<>{} D(){}}{\mathsf{CVars}_{#1}^{#2}\ifthenelse{\isempty{#3}}{}{(#3)}}
\DeclareDocumentCommand{\bounds}{O{} D<>{} D(){}}{\mathcal{B}_{#1}^{#2}\ifthenelse{\isempty{#3}}{}{(#3)}}
\DeclareDocumentCommand{\operatorCost}{O{} D<>{} D(){}}{\mathop{\mathsf{op}}_{#1}^{#2}\ifthenelse{\isempty{#3}}{}{(#3)}}
\DeclareDocumentCommand{\totNumAttacks}{O{} D<>{}  D(){}}{\mathsf{A}_{#1}^{#2}\ifthenelse{\isempty{#3}}{}{(#3)}}
\DeclareDocumentCommand{\totNumBlocks}{O{} D<>{} D(){}}{\mathsf{B}_{#1}^{#2}\ifthenelse{\isempty{#3}}{}{(#3)}}
\DeclareDocumentCommand{\frequency}{O{} D<>{} D(){}}{\mathsf{F}_{#1}^{#2}\ifthenelse{\isempty{#3}}{}{(#3)}}
\DeclareDocumentCommand{\successProbability}{O{} D<>{} D(){}}{\mathsf{V}_{#1}^{#2}\ifthenelse{\isempty{#3}}{}{(#3)}}
\newcolumntype{L}[1]{>{\hsize=#1\hsize\raggedright\arraybackslash}X}%
\newcolumntype{R}[1]{>{\hsize=#1\hsize\raggedleft\arraybackslash}X}%
\newcolumntype{C}[1]{>{\hsize=#1\hsize\centering\arraybackslash}X}%
\newcolumntype{R}[2]{%
    >{\adjustbox{angle=#1,lap=\width-(#2)}\bgroup}%
    l%
    <{\egroup}%
}
\DeclareDocumentCommand{\bijection}{O{} D<>{} D(){}}{\mathsf{bi}_{#1}^{#2}\ifthenelse{\isempty{#3}}{}{(#3)}}
\newcommand{\Id}{\ensuremath{\mathop{\mathsf{Id}}}}
\newcommand{\amatrix}[1]{\mathbf{#1}}
\newcommand{\avector}[1]{\boldsymbol{#1}}
\DeclareDocumentCommand{\trace}{O{} D<>{} D(){}}{\mathsf{tr}_{#1}^{#2}\ifthenelse{\isempty{#3}}{}{(#3)}}
\DeclareDocumentCommand{\traces}{O{} D<>{} D(){}}{\mathsf{Tr}_{#1}^{#2}\ifthenelse{\isempty{#3}}{}{(#3)}}
\DeclareDocumentCommand{\execution}{O{} D<>{} D(){}}{\mathsf{ex}_{#1}^{#2}\ifthenelse{\isempty{#3}}{}{(#3)}}
\DeclareDocumentCommand{\executions}{O{} D<>{} D(){}}{\mathsf{Ex}_{#1}^{#2}\ifthenelse{\isempty{#3}}{}{(#3)}}
\DeclareDocumentCommand{\valuation}{O{} D<>{} D(){}}{\mathsf{eval}_{#1}^{#2}\ifthenelse{\isempty{#3}}{}{(#3)}}
\DeclareDocumentCommand{\sensitivity}{O{} D<>{} D(){}}{\mathsf{Sens}_{#1}^{#2}\ifthenelse{\isempty{#3}}{}{(#3)}}
\DeclareDocumentCommand{\specificity}{O{} D<>{}  D(){}}{\mathsf{Spec}_{#1}^{#2}\ifthenelse{\isempty{#3}}{}{(#3)}}
\DeclareDocumentCommand{\ppv}{O{} D<>{}  D(){}}{\mathsf{PPV}_{#1}^{#2}\ifthenelse{\isempty{#3}}{}{(#3)}}
\DeclareDocumentCommand{\weight}{O{} D<>{}  D(){}}{\mathsf{w}_{#1}^{#2}\ifthenelse{\isempty{#3}}{}{(#3)}}
\DeclareDocumentCommand{\quantile}{O{} D<>{}  D(){}}{\mathsf{q}_{#1}^{#2}\ifthenelse{\isempty{#3}}{}{(#3)}}
\DeclareDocumentCommand{\fitness}{O{} D<>{}  D(){}}{\mathsf{F}_{#1}^{#2}\ifthenelse{\isempty{#3}}{}{(#3)}}
\DeclareDocumentCommand{\generations}{O{} D<>{}  D(){}}{\mathsf{Gen}_{#1}^{#2}\ifthenelse{\isempty{#3}}{}{(#3)}}
\DeclareDocumentCommand{\generationwidth}{O{} D<>{}  D(){}}{\mathsf{GenW}_{#1}^{#2}\ifthenelse{\isempty{#3}}{}{(#3)}}
\DeclareDocumentCommand{\mutationrate}{O{} D<>{}  D(){}}{\mathsf{mutr}_{#1}^{#2}\ifthenelse{\isempty{#3}}{}{(#3)}}
\DeclareDocumentCommand{\trainingrate}{O{} D<>{}  D(){}}{\mathsf{trainr}_{#1}^{#2}\ifthenelse{\isempty{#3}}{}{(#3)}}
\DeclareDocumentCommand{\actions}{O{} D<>{} D(){}}{\mathsf{Act}_{#1}^{#2}\ifthenelse{\isempty{#3}}{}{(#3)}}
\DeclareDocumentCommand{\variableAlg}{m O{} D<>{}  D(){}}{\mathsf{#1}_{#2}^{#3}\ifthenelse{\isempty{#4}}{}{(#4)}}
\DeclareDocumentCommand{\operator}{O{} D<>{} D(){}}{\mathsf{op}_{#1}^{#2}\ifthenelse{\isempty{#3}}{}{(#3)}}
\DeclareDocumentCommand{\labelfunction}{O{} D<>{} D(){}}{\mathsf{c}_{#1}^{#2}\ifthenelse{\isempty{#3}}{}{(#3)}}
\DeclareDocumentCommand{\signatureFunction}{O{} D<>{} D(){}}{\mathsf{s}_{#1}^{#2}\ifthenelse{\isempty{#3}}{}{(#3)}}
\DeclareDocumentCommand{\randmod}{O{} D<>{} D(){}}{\mathsf{\epsilon}_{#1}^{#2}\ifthenelse{\isempty{#3}}{}{(#3)}}
\DeclareDocumentCommand{\mean}{O{} D<>{} D(){}}{\mathsf{\mu}_{#1}^{#2}\ifthenelse{\isempty{#3}}{}{(#3)}}
\DeclareDocumentCommand{\sd}{O{} D<>{} D(){}}{\mathsf{\sigma}_{#1}^{#2}\ifthenelse{\isempty{#3}}{}{(#3)}}
\newcounter{MYtempeqncnt}
\title{QuADTool: Attack-Defense-Tree Synthesis, Analysis and Bridge to Verification
}
\author{Florian Dorfhuber\inst{1,2} \Letter \email{florian.dorfhuber@in.tum.de}
	\orcidID{0000-0003-3755-7171}
	 \and \\ Julia Eisentraut\inst{2,3} \email{julia.eisentraut@posteo.de} \orcidID{0000-0002-7735-8751}
	 \and \\ Katharina Klioba\inst{4} \email{katharina.klioba@tuhh.de} \orcidID{0009-0002-7946-917X}
	 \and \\ Jan K\v{r}et\'{i}nsk\'{y}\inst{1,2} \email{jan.kretinsky@fi.muni.cz}
	 \orcidID{0000-0002-8122-2881}
	 }
\institute{Masaryk University, Brno, Czech Republic \and Technical University of Munich, Munich, Germany
	\and State Parliament of Northrhine-Westfalia, Germany \and Technical University of Hamburg, Hamburg, Germany}
\newcommand{\para}[1]{\medskip\noindent\textbf{#1}}
\begin{document}
\maketitle

\pagestyle{plain}

\begin{abstract}
  %
  Ranking risks and countermeasures is one of the foremost goals of
  quantitative security analysis. One of the popular frameworks, used also in industrial practice, for this task are \emph{attack-defense trees}.
  Standard quantitative analyses
  available for attack-defense trees can distinguish likely from
  unlikely vulnerabilities.
  We provide a tool that allows for easy synthesis and analysis of those models, also featuring probabilities, costs and time. Furthermore, it provides a variety of interfaces to existing model checkers and analysis tools.

  Unfortunately, currently available tools rely on \emph{precise}
  quantitative inputs (probabilities, timing, or costs of attacks),
  which are rarely available.
  Instead, only statistical, imprecise information is typically available, leaving us with \emph{probably approximately correct} (\PAC{})
  estimates of the real quantities.
  As a part of our tool, we extend the standard analysis techniques so they can handle the PAC input and yield rigorous bounds on the imprecision and uncertainty of the final result of the  analysis. 
  
  
  %
  
\end{abstract}


\section{Introduction}
\label{sec:introduction}
\emph{Attack trees}, e.g., ~\cite{kordy2013,widel2019}, and their extensions are a widespread formalism for threat modeling and risk assessment. Their applications range from analyzing attacks on smart grids~\cite{beckers2014}, ATMs~\cite{fraile2016}, optical power meters~\cite{fila2019}, SCADA control systems~\cite{byres2004,mcqueen2006,ten2007,lopez2012,al2019} or software supply chains~\cite{ohm2020} to intelligent autonomous vehicles and vehicular networks~\cite{houmer2020,kim2020,roblesmodel}, secure deployment of HTTPS~\cite{soligo2020} or cybersecurity awareness trainings for election officials~\cite{schurmann2020}. Moreover, attack trees are particularly useful for meta-modeling threats, combining several risk analyses together, e.g., in telemedicine~\cite{kim2020b,ramos2020}, impersonation attacks in e-learning~\cite{rosmansyah2020} or IoT systems~\cite{krichen2019}.  Besides these individual usages, attack trees are generally recommended as a means to identify the point of the weakest resistance according to the \emph{OWASP CISO AppSec Guide},\footnote{\url{https://www.owasp.org/index.php/CISO\_AppSec\_Guide:\_Criteria\_for\_Managing\_Application\_Security\_Risks}} or to retrospectively understand attacks (through so-called red teams) by NATO's \emph{Improving Common Security Risk Analysis} report~\cite{nato2008}.

In practice, there are not enough resources to fix every single vulnerability found in security assessment. Hence, the need to rank their importance arises. \emph{Quantitative analysis} of attack trees can utilize quantitative information such as probabilities, costs, and timing to compute how likely particular vulnerabilities are, how much the defense may cost, or how long attacks might take.  Consequently, it can distinguish unrealistically costly or improbable attacks from those that need to be fixed.  While there is a recent body of theoretical work in this direction, e.g.~\cite{kumar2015,hermanns2016,aslanyan2016,pekergin2016,kumar2017,kordy2018,eisentraut2019}, there are two main hindrances to practical use.  Firstly, it is the very \textit{limited tool support} for convenient quantitative modeling and analysis. Secondly, it is the \textit{uncertainty and imprecision of the quantitative input} information, arising in this domain even more than elsewhere. Indeed, apart from standard questions like \emph{What is the actual failure rate of the security camera?}, we may need to ask \emph{What is the probability of guessing a password using a database of the most common passwords and one year of CPU time?} or \emph{How much faster and cheaper is a particular attack going to be next year based on past years' data?} 

We address both issues: we provide a convenient \emph{tool}, which also features quantitative analysis capable of rigorously reflecting the \emph{uncertainty and imprecision of the quantitative inputs}.  To this end, we formalize such models as \emph{\PAC{}-quantitative trees}. Then, we show how to obtain them practically using our tool and provide an algorithm for their analysis.
In order to properly evaluate our tool and the \PAC{}-input quantitative analysis we create a benchmark suite~\atbest{} of $\ADT$ found in literature and ones generated randomly. 
We make all models available to the public.\footnote{For review, all models are in our artifact at \url{https://zenodo.org/records/11090469}.\label{fn:artifact}}

\para{Tool.}  We provide~\quadtool{}, a tool for the whole modeling-analysis workflow. The tool can import trees in the \DOT{} format and \XML{} format of \adtool{}~\cite{gadyatskaya2016}. The graphical interface allows for convenient creating, editing, and combining trees and generating \PAC{}-quantities directly from provided data. Further, the tool translates the trees to formats of model checkers \UPPAAL{}~\cite{behrmann2006}, \prism{}~\cite{kwiatkowska2011}, \prismgames{}~\cite{chen2013}, \modest{}~\cite{hartmanns2014}, and \storm{}~\cite{dehnert2017} (via \jani{}~\cite{budde2017} export of \modest{}). Consequently, the GUI effectively interfaces with other available analyses (such as~\cite{gadyatskaya2016,ruijters2017}) and to our new \PAC{}-input analysis. Besides, to help non-experts successfully construct meaningful models, we provide detailed automatically generated feedback on compliance with the assumptions of each of the offered analyses.

\para{\PAC{}-quantitative attack-defense trees.}  All the previously cited quantitative analyses assume that the quantities (transition probabilities, costs, delays) are given as exact numbers, allowing analyses to output exact numbers, too.  However, in reality, we often lack perfect quantities. 
Nonetheless, this data allows us to derive estimates of the actual value, possibly together with confidence intervals bounding these estimates' uncertainty. Hence, we assume the input quantities are \emph{probably approximately correct (\PAC{})}, i.e., with a given probability in a given interval around the estimate. We demonstrate how easily such trees can be obtained, relying on \emph{statistics} or \emph{time-series analysis}. To that end, we consider independent samples of the unknown quantity data (e. g., past years' information). We then extend the traditional bottom-up analyses to bound the final \PAC{} uncertainty resulting from the input quantities' \PAC{} uncertainty. Interestingly, our extended analysis comes with little additional computational cost in comparison to the standard analysis on average-size attack-defense trees.  Consequently,  our approach allows practitioners to stick to their current analysis and learn about the quality and reliability of its results with almost no additional effort.

\medskip

To summarize, our contribution is threefold:
\begin{itemize}[noitemsep,nosep]
\item Our tool~\quadtool{} addresses practical needs in attack-defense tree modeling and analysis, including a GUI and a CLI for editing and analyzing \adt{}. It also gives feedback on suitable exports for further analyses. 
\item The tool features a novel quantitative analysis of \PAC{}-attack-defense trees, the first extensions of attack-defense trees reflecting the data inaccuracy.
\item The tool comes equipped with \emph{AT BEnchmark SuiTe} (\atbest{}), collecting benchmarks from literature as well as randomly generated ones.
\end{itemize}

\para{Organization.} The paper is organized as follows. After exemplifying the framework of attack-defense trees in \Fref{sec:attack-defense-trees}, the tool functionality is presented in
\Fref{sec:functionality}. We describe our \PAC{}-input analysis in \Fref{sec:pac}.
 \Fref{sec:experiments} evaluates our contributions experimentally and \Fref{sec:conclusion} concludes. We provide an artifact with the tool and models\footref{fn:artifact}.

\medskip

\para{Related work.}
Besides two industrial tools dedicated to attack-defense tree analysis\footnote{see \url{http://www.amenaza.com} and \url{http://www.isograph.com/software/}}, which neither support more complex attack-defense tree structures such as~\cite{hermanns2016}, nor export to standard model checking tools, there are three dedicated (and up-to-date) \emph{quantitative} attack-defense tree analysis tools:
\begin{itemize}[nosep]
\item \adtool{}~\cite{gadyatskaya2016} supports attack-defense trees with the operators~$\opAnd$, $\opOr$ and Sequential-$opAnd$~$\opSand$ and attack-defense trees with operators~$\opAnd$ and $\opOr$. It uses the standard bottom-up traversal for quantitative model checking. In contrast to our tool, \adtool{} does not support \PAC{}-input analysis.   Additionally, one cannot add \PAC{}-input analysis to \adtool{} as a `custom' domain since multi-parameter attack-defense trees are not supported. We compare the performance of the bottom-up analysis using \adtool{} and \quadtool{} in \Fref{sec:experiments}.
\item \attop{}~\cite{ruijters2017} is a software bridging tool, which provides various horizontal and vertical model transformations to the \adtool{}~2.0 input format or to the model checker UPPAAL, however not to \prism{} or \modest{}. \PAC{} models and \PAC{} analysis are not supported.
\item \risqflan{}\cite{ter2021quantitative} allows specifying attack-defense trees with quantitative constraints using a dedicated probabilistic language (XTEXT grammar)\footnote{While the XTEXT files get compiled to dot-files for graphical display, these are currently not entirely compatible with the \quadtool{} import.}. Statistical model checking and precise analysis using \storm{} and \prism{} are supported. However, \risqflan{} neither supports \PAC{}-inputs nor continuous-time. Additionally, exports to \storm{} and \prism{} rely on discrete-time Markov chains rather than games or stochastic timed automata. Stochastic model checking in \risqflan{} relies on \MultiVeStA{}\cite{sebastio2013multivesta}. Hence \risqflan{} does not allow synthesizing optimal attack and defense strategies. However, \risqflan{} allows assigning detection-rates to inner nodes and to separate defenses from countermeasures (which we do not feature). Hence, a complete performance comparision between \risqflan{} and \quadtool{} is not possible.
\end{itemize}

All these quantitative analyses have precise inputs and outputs (rather than \PAC{} values). The resulting uncertainty is not bound, i.e., there is no estimate of how likely the analysis results are close to the actual results. In contrast our \PAC{} analysis takes uncertainty of each input value into account. Hence, our resulting confidence interval covers likely deviations and bounds the uncertainty, which previous quantitative analysis cannot.

  Surprisingly, the uncertainty of the input quantities in attack-defense trees has not gained too much attention yet~\cite{buoni2010,mezei2011,pekergin2016}. 
  Buldas et al.~\cite{buldas2020} present the first approach to account for inconsistency and gaps in historical data and experts' estimates in attack-defense trees. The main limitation of their approach is the \emph{impossibility of reflecting the data accuracy, which is the main contribution of this paper}.
  
  In \cite{lopuhaa2022efficient} a new approach to extend quantitative analysis on Attack Trees is presented, that may lift the structural bounds our tool currently has, but does cover a smaller set of operators compared to our approach. Additionally, there is no support for defense operations, yet.


\section{Attack-Defense Trees}
\label{sec:attack-defense-trees}
In this section, we briefly recall the notion of \emph{attack-defense trees} (\adt), e.g.,~\cite{kordy2011,hermanns2016}. Attack-defense trees are labeled trees, where the root represents the overall goal to attack a system. This goal can be recursively refined into smaller sub-goals, giving rise to descendant nodes, until no further refinement is possible. Inner nodes receive operators as labels binding the roles of the children. The leaves represent \emph{basic events} (denoted $\basicEvents$), unique observable happenings in
the real world. They are partitioned into \emph{basic attack steps} and \emph{basic defense steps}.

\begin{definition}[Attack-Defense Trees~\cite{hermanns2016}]
	\label{def:add}
	An \emph{attack-defense tree}\footnote{In literature, \adt{} are often directed acyclic graphs (DAG), yet called trees. Hence, we call our framework attack-defense trees, although the underlying structure is a graph.} (over a set of basic events~$\basicEvents$) is a tuple $\exampleADD$ where
	\begin{itemize}
		\item $\left (\vertices, \edges \right)$ is a \emph{directed acyclic graph}, with a designated \emph{goal} sink vertex~$\topGoal$ for the attacker. The source vertices $\basicEvents \subseteq \vertices$ are called \emph{basic events}. All other vertices, \mbox{$\gates \coloneqq \vertices \setminus \basicEvents$}, are called \emph{gates}; direct predecessors w.r.t. $(\vertices,\edges)$ of each gate are called its \emph{inputs}.
		\item We additionally require that gates labeled with~$\opAnd$, $\opOr$ and their sequential versions $\opSand$ or $\opSor$  have at least two inputs and that other gates have exactly one input.
		\item $\typeFunction{} \colon\ \gates \to \operators$ is the \emph{type function} assigning an operator to each gate. 
		\item $\triggerTransitions \subseteq \set{\vertex \in \gates \mid \typeFunction(\vertex) = \opTrigger} \times \basicEvents$ are \emph{trigger} edges from $\opTrigger$ gates to BE;       
		\item
		$\resetTransitions \subseteq \set{\vertex \in \gates \mid \typeFunction(\vertex) = \opReset}\times \basicEvents$ are \emph{reset} edges from $\opReset$ gates to BE;
	\end{itemize}
\end{definition}

\begin{figure*}[htbp]
	\centering
	\includegraphics[width=\textwidth]{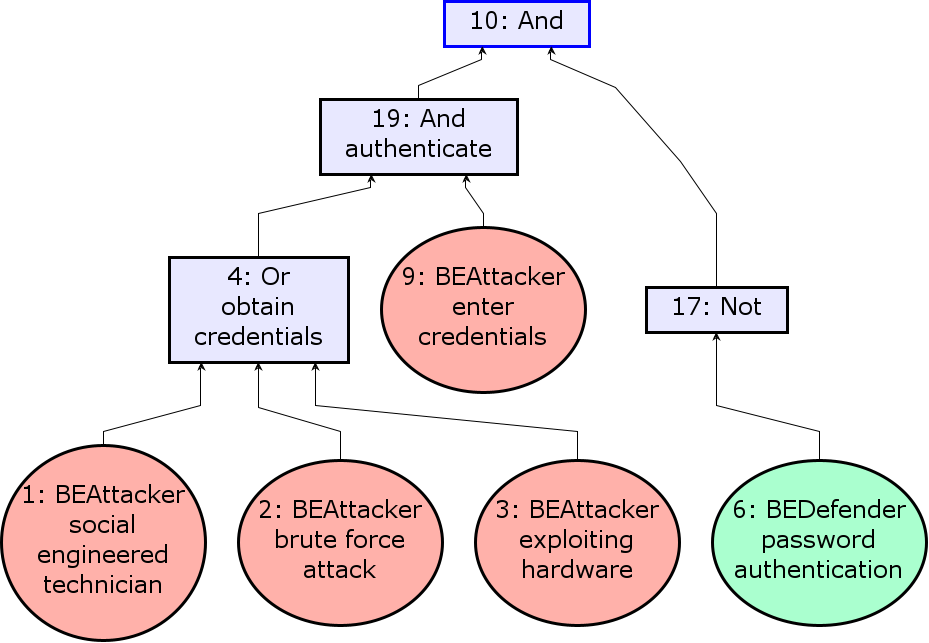}
	\caption{The $\ADT{}$ from \cite{fila2019} represents ways to gain access to a power meter. We will use this model also in \Fref{sec:pacanalysis}. In \Fref{sec:experiments} we evaluate our tool on more complex attack-defense trees. }
	\label{fig:toyexample}
\end{figure*}


\begin{example}
	\label{ex:toyexample}
	In \Fref{fig:toyexample}, we present a small attack-defense tree representing an attacker (nodes in red) who wants to maliciously gain access to a digital power meter. The goal node at the top (with a blue border) represents the overall attack goal. All leaves of the tree are basic attack steps (colored red or green). 
	The goal is refined into different subtrees using the operator~$\opAnd$, i.e., all of the subtrees need to be successfully attempted for the overall attack to be successful. The overall attack is successful if, the attacker gained the credentials (event~(4)) and successfully entered them (event~(9)). Also for the attack to be successful the defender (node in green) must not (event~(17) with operator~$\opNeg$) have installed additional authentication (event~(6)).
	Event~(4) is refined with the operator~$\opOr$ meaning it is successfully attempted if one of its children (event~(1), (2), (3)) is attempted successfully.
\end{example}


To model complex attack-defense scenarios, a rich set of operators \[\operators = \set{\opAnd,\opOr,\opNeg, \opSand,\opSor,\opTrigger,\opReset}\] is necessary: Intuitively, $\opAnd,\opOr, \opNeg$ behave like their logical counterparts. $\opSand$ and $\opSor$ pose temporal (but not causal) restrictions on the order in which its children are attempted. 
Further, for Trigger~$\opTrigger$,  suppose its child is successfully attempted. In that case, several basic events are activated (\emph{triggered}) for attempting (they cannot be attempted earlier than this). 
Finally, Reset~$\opReset$ is successfully attempted if its input is attempted successfully. As a side effect, events that have been reset can be attempted again, i.e., we allow players to attempt events several times.

A formal semantics of the operators used in the theoretical development in the subsequent section can be found in \Fref{def:powerset} in the Appendix. Formal semantics for the remaining operators are not necessary to follow the paper (They can be found in~\cite{arnold2014,hermanns2016,eisentraut2019}).
%



\section{Tool Overview}
\label{sec:functionality}
This section shows the usage of $\quadtool{}$ both in a graphical user interface and direct command-line access. We evaluate our analyses in \Fref{sec:experiments}.

\begin{figure*}[t]
	\centering
	
	\makebox[\linewidth]{
  \begin{tikzpicture}[auto,node distance=8mm,>=latex,font=\footnotesize]
    
    \node[cost] (s3) {construct or \nodepart{two} import model};

    \node[cost,right=7mm of s3] (s4) {use default values or
      \nodepart{two} add values from various sources};

    \node[cost,right=5mm of s4] (s5) {export to
      model checker \nodepart{two} for further analyses};
    
    \node[cost,above=8mm of s4] (s6) {exact or \PAC{}-analysis
      \nodepart{two} in tool for costs, delays, probabilities};
    
    \node[above left=3mm and 5mm of s3] (s0) {\DOT{}};
    \node[above left=-4mm and 5mm of s3] (s1) {\XML{}};
    \node[below left=-4mm and 5mm of s3] (s2) {\formatstandard{GUI}};
    \node[below left=3mm and 5mm of s3] (s21) {\atbest{}};

    \node[below left=6mm and -10mm of s4] (s11) {iid samples};
    \node[below=6mm of s4] (s12) {time series};
    \node[below right=6mm and -10mm of s4] (s13) {expert estimates};

    \node[above right=0mm and 5mm of s5] (s8) {\prism{}};
    \node[right=5mm of s5] (s9) {\modest{}};
    \node[below right=0mm and 5mm of s5] (s10) {\UPPAAL{}};
	
    \draw[->] (s5) to node[midway,above,sloped]{} (s8);
    \draw[->] (s5) to node[midway,above,sloped]{} (s9);
    \draw[->] (s5) to node[midway,below,sloped]{} (s10);
    
    \draw[->] (s0) to node[midway,above,sloped]{} (s3);
    \draw[->] (s1) to node[midway,above,sloped]{} (s3);
    \draw[->] (s2) to node[midway,below,sloped]{} (s3);
    \draw[->] (s21) to node[midway,below,sloped]{} (s3);
    
    \draw[->] (s3) to node[midway,above,sloped]{} (s4);
    \draw[->] (s4) to node[midway,above,sloped]{} (s5);
    \draw[->] (s4) to node[midway,above,sloped]{} (s6);
    \draw[->,dashed] (s6) to node[above left]{refine} (s3);

    \draw[->] (s11) to node[midway,above,sloped]{} (s4);
    \draw[->] (s12) to node[midway,above,sloped]{} (s4);
    \draw[->] (s13) to node[midway,above,sloped]{} (s4);
  \end{tikzpicture}
}
  \caption{The Workflow with \quadtool{} allows for input from four sources. In the next step the nodes can be refined using quantitative date from different sources. Additionally, the tool allows for analysis of said data. In a last step the data can be exported to model-checkers for further analysis. }
  \label{fig:quadtool}
  \vspace*{-2.5em}
\end{figure*}
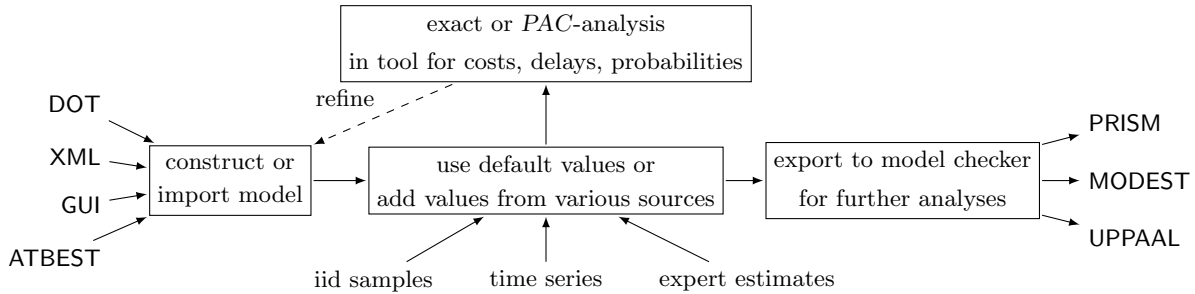 

\subsection{Model Construction}
The workflow using $\quadtool{}$ (see \Fref{fig:quadtool}) starts with
model construction. For convenient modeling, $\quadtool{}$ supports a
dedicated graphical user interface and the following two input
formats:
\begin{itemize}
\item The \DOT{}-format as described in the Graphviz-Tool~\cite{Gansner93atechnique}. We recommend this format as it supports storing all additional parameters like \PAC{}-parameters and simple formatting.
\item The \XML{}-format used in \adtool{}~\cite{gadyatskaya2016}. To ensure compatibility with this tool, our tool only exports strict tree structured $\ADT$ without \PAC{}-values to this format.
\end{itemize}

Our \emph{GUI}, see \Fref{fig:editor}, offers drag-and-drop manipulation of trees, creating new ones from scratch or by import. One can also easily combine existing models by importing several trees and connecting them.

There exist various approaches to attack-defense tree analysis supporting different features (see~\cite{widel2019} for a survey on quantitative attack-defense tree analysis). Hence, $\quadtool{}$ comes with an integrated \emph{feedback system}~(see \Fref{fig:editor}) to support the user in finding potential errors or incompatibilities before exporting. Furthermore, it gives advice on which export formats may be beneficial for the particular model.

\subsection{\atbest{} Benchmark Collection for Attack-Defense Trees}
\label{sec:atbest}
To support the standardization and benchmark collection among attack-defense tree researchers and practitioners, we bring forth an attack-tree {benchmark collection} \emph{AT BEnchmark SuiTe} (\atbest{}) accessible at~\atbestURL{}. By uploading a model and registering with a name and email address, users receive a token to access all uploaded models in the database. In more details, our benchmark set consists of 42 models from 24 previously published papers,~\cite{mauw2005,ten2007,kordy2011,kordy2013,beckers2014,kumar2015,aslanyan2016,pekergin2016,fraile2016,gadyatskaya2016,hermanns2016,kumar2017,ruijters2017,kordy2018,andre2019,fila2019,krichen2019,rosmansyah2020,buldas2020,houmer2020,kim2020b,ohm2020,schurmann2020,roblesmodel}, which we constructed using \quadtool{}. Additionally, we generated 626 models of various sizes for performance testing. These are binary \adt{} using the gates $\opAnd$ and $\opOr$, created by combining two smaller models with a new root recurrently. 

\subsection{Obtaining \PAC{}-Quantities for Attack-Defense Trees}
\label{sec:getpacqadt}
After creating a model, we add default values or estimates to basic events (second step in the workflow (see \Fref{fig:quadtool})). Our \PAC{} (s. \Fref{sec:pac}) assumption is less restrictive than the standard assumption of exact quantities and already applicable to various examples.


To derive \PAC{}-parameters, the simplest way is to calculate a Gaussian estimate with confidence intervals. The Gaussian estimate is best for independent identically distributed (i.i.d.)  samples. Our tool allows the generation of those values via csv-input. One can also directly use a AutoRegressive-Moving Average (\formatnames{ARIMA}) model~\cite{arimacovid,arimaelectricity,arimafinance,arimapacarticle,arimarental,hughes2010} inside the tool to derive bounds for cases with time-series data.  

%
%


Please note that we do not aim to create the best possible time-series model for a given data-set using our tool since this still requires a more detailed analysis. However, we want to allow users to get a reasonably good prediction quickly.

\subsection{Analyses}
\label{sec:pacanalysis}
\quadtool{} allows to analyze models using an exact or \PAC{} analysis or using existing model checkers. 

 Like the \adtool{}, $\quadtool$ supports direct analyses on the tree for exact minimal attack costs, exact minimal attack delays, and exact success probabilities. Additionally, we support our new \PAC{}-input analyses for each of these domains. 

Furthermore, users can export the final model to various other tools and model checkers using one of the following file formats: \XML{}, \DOT{}, \prism{}-, \modest{}-, \UPPAAL{}-input-language. For \prism{}, we use the semantics in terms of stochastic turn-based games given in~\cite{eisentraut2019} and for
\modest{} and \UPPAAL{} the semantics in terms of stochastic timed automata given in~\cite{hermanns2016}. \footnote{To use \storm{}, we suggest using the export to \modest{} first and then, use their export to the \jani{}-format.}  In the following, we describe the interfaces to other tools and their use in more detail.

\para{Connection to \adtool{} and \attop} These two attack-defense tree tools have a restricted set of features and expect \XML{} inputs. Thus, we cannot analyze every attack-defense tree constructed in \quadtool{} using these tools. For instance, whenever a model has multiple root nodes, our export omits all but one tree. Furthermore, the \adtool{} only supports the operators $\opOr$, $\opAnd$, $\opSand$ and no timed events (with clocks or distributions). 

\para{Timed Analysis} Analysis involving time can be either done within $\quadtool{}$ or outsourced to \modest{} or \UPPAAL{}.

The translation to \modest{} uses actions for each three-valued logic state, i.e., each basic event can either be undecided, successfully or unsuccessfully attempted. For \UPPAAL{} a similar approach with automata for every state is used. Both approaches can model systems with multiple root nodes and complex operators.
%
%

\para{Game Analysis} Lastly, $\quadtool{}$ can also export to the \prism{} format, not supporting time, but supporting games.  In this framework, all basic events have the same chance to be executed at a given (discrete) point in time, impacting the evaluation of sequential operators.

\para{Integration} We provide the functionality to send a model directly to \UPPAAL{}~cora, \adtool{}, \prismgames{} and \modest{}~simulator in the GUI, after specifying the Tool location, as seen in \Fref{fig:editor} (right lower panel).

\begin{figure*}[t]
	\centerline{\includegraphics[height=6.5cm]{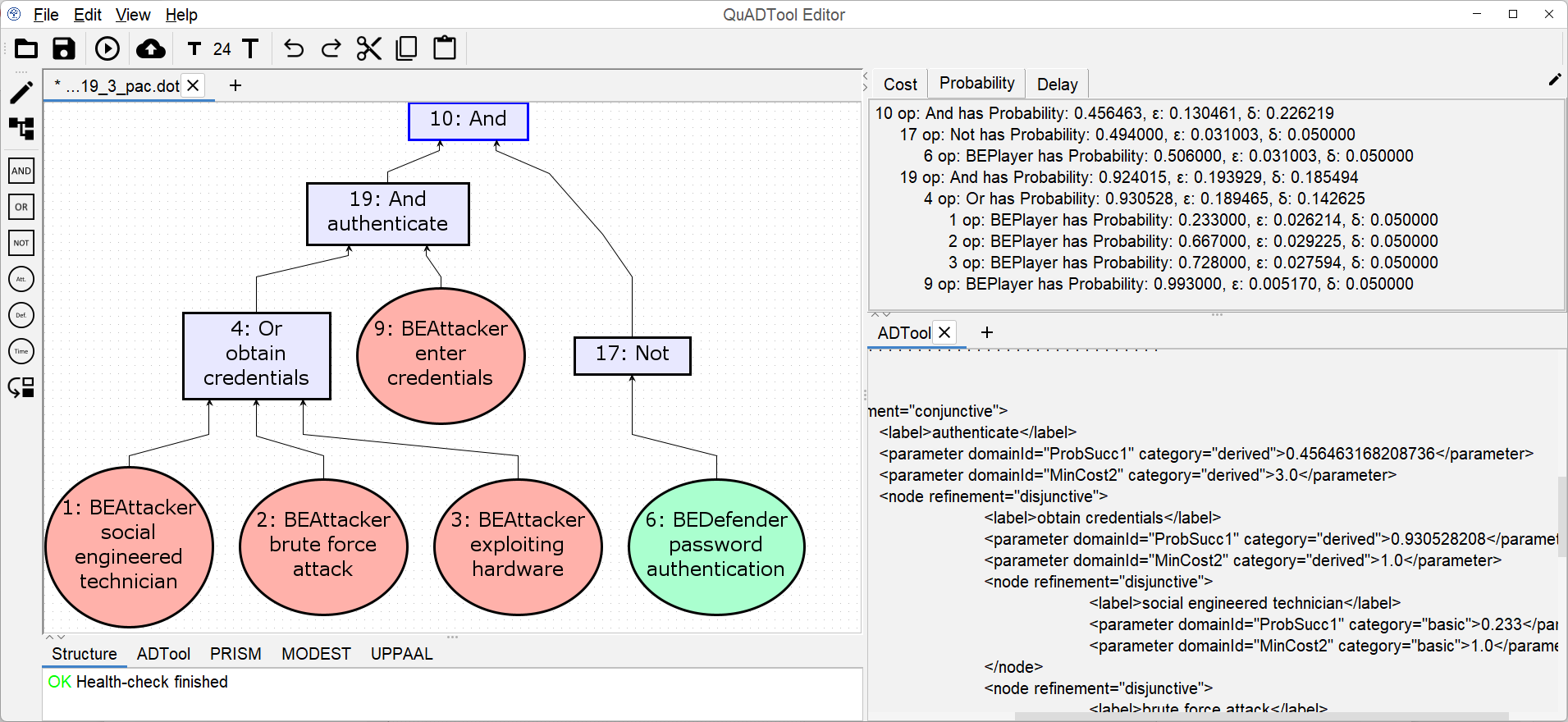}}
	\caption{$\quadtool$ with the case study from~\cite{fila2019}. At the bottom, one can see the feedback for the selected export or general issues. On the right, the quantitative results for the selected domain (here probabilities) are given for all nodes with the \PAC{}-quantities. Below is the result of a model checker (here \adtool{}).}
	\label{fig:editor}
\end{figure*}

\section{\PAC{}-Quantitative \adt{}: Fighting the Uncertainty}
\label{sec:pac}
In this section we describe the technique underlying our tools' functionality to analyze $\adt$ with not fully known quantities. While it is less standard to do so in a tool paper, this is not described elsewhere and thus must be included in the tools' contributions for completeness here. 

A standard way to reason about costs, probabilities, and time within attack-defense trees is to equip basic events with quantitative information and use bottom-up traversal algorithms to derive the root's final value~\cite{mauw2005,kordy2011,arnold2014,jhawar2015}. In this section, we augment this analysis to bound the uncertainty resulting from estimating the quantitative values of the basic events.

In \Fref{app:correctness}, we show that our computations of
probabilities, costs and delays are correct with reference to standard Boolean
and powerset valuations of attack-defense
trees~\cite{mauw2005,kordy2011,gadyatskaya2016}. However, standard
bottom-up traversal algorithms are efficient and correct only on
simpler structures than our \adt{} in \Fref{def:add} (for more
details, see~\cite{kordy2018}). In particular, they require a
\emph{static interpretation}, i.e., there is only one shot to decide
which basic events players try and which they do not
try. Consequently, we restrict ourselves to one of two domains supported by the
\adtool{} (attack-defense trees with gates $\opAnd$, $\opOr$ and
$\opNeg$). For reader's convenience, we exclude $\opTrigger$ in this
section but include it in \Fref{app:correctness}. Hence, our analysis also covers
the second domain of the \adtool{}.

\subsection{Quantitative \adt{} and \PAC{}-Quantitative \adt{}}
\para{Quantitative \adt{}.} Given a type $\type$ of the quantitative information, the quantitative \adt{} is an \adt{} equipped with a \emph{quantitative valuation of basic events} $\valuesBE: \basicEvents \to \type$.  The definition is generic to carry over to various settings: While real numbers may represent cost~\cite{gadyatskaya2016}, acylic phase-type distributions may, for instance, represent the likelihood of a successful attack over time~\cite{arnold2014}.

\begin{example}[Type]
  For success probabilities, we simply take $\type=[0,1]$. For costs and delays, we take $\type=\RR_{\geq 0}$.
\end{example}

\para{\PAC{}-Quantities.} In most real-world examples, the quantitative information attached to basic events is rarely known precisely but comes with uncertainty. We call quantitative information \emph{probably approximately correct} (\PAC{}), if there is an \emph{imprecision bound} $\error$ and an \emph{uncertainty probability} $\prob$, such that the distance between the true value and the given estimate is less than $\error$ with probability at least $1-\prob$. Formally:

\begin{definition}[\PAC]
  Let $\type$ be a set equipped with a metric~$\metric$. We say that a $\type$-valued random variable $X$ is $(\error,\prob)$-\PAC{} for the true value~$\val$ if the probability\footnote{The method used to learn the estimate $X$ from data sampled from a distribution parametrized by $\val$ usually determines the probability space. For instance, for a $\val$-biased coin with $\val=0.49$ and 20 samples (coin tosses),
    the standard statistics methods yield a probability space with highest density of $X$ in $0.49$ and inducing the respective confidence intervals for $X$ depending on the outcome of the experiment.} of $X \in \set{t \mid \metric(t,\val)\le \error}$ is (at least) $1-\prob$. Exact values can be considered $(0,0)$-\PAC{}.
\end{definition}

\begin{example}[Metric]
  For probabilities, costs, and delays, the metric is the standard Euclidean distance (absolute value of their difference).
\end{example}	

\para{\PAC{}-Quantitative \adt{}} Given a type $\type$ of the quantitative information, the \PAC{}-quantitative \adt{} is a quantitative \adt{} equipped additionally with \emph{\PAC{}-parameters} $\error[\basicEvent] \in \RR_{\geq 0}$ and $\prob[\basicEvent] \in [0,1]$ for each basic event $\basicEvent\in\basicEvents$.

\subsection{\PAC{}-Input Extension of Bottom-Up Analyses}
Before giving the generic algorithm for the bottom-up propagation of quantitative values~\cite{mauw2005,kordy2011}, we will give a short introduction to bottom-up analyses.

\para{Bottom-Up Traversal Analysis} Let $\ADT$ be an attack-defense tree with set $\vertices$ of nodes, $\topGoal$ its root, $\type$ a non-empty domain of quantitative information, $\interpretation$ an interpretation i.e. a mapping of \adt{}-operators to operations over $\type$, and $\valuesBE$ a basic-events valuation with input values from $\type$. Then its bottom-up traversal semantics $\values \colon \vertices \to \type$ for $\ADT$ is defined recursively. The valuation of $\ADT{}$ is defined as the value of its root node $\values(\topGoal)$.

\begin{example}[Probabilistic \adt{}] \label{def:prob} We have already seen that the domain for probabilities is $[0,1]$.  Here we instantiate the interpretation for probabilities. We define a probabilistic interpretation~$\interpretation$ (with $\type=[0,1]$) as follows: (Similar examples for the domain of Cost and Probability can be found in \Fref{app:examplesBottomUp})
	\begin{itemize}
		\item
		$\displaystyle{\interpretation(\vertex) (x_1,x_2)= x_1 \cdot x_2}$, if 
		$\typeFunction(\vertex)=\opAnd$
		\item
		$\displaystyle{\interpretation(\vertex)(x_1,x_2)= x_1 + x_2 - x_1 \cdot x_2}$ for $\typeFunction(\vertex)=\opOr$
		\item $\displaystyle{\interpretation(\opNeg)(x)=1-x}$
	\end{itemize}	
\end{example}

\para{Extension to PAC-Values} In the following, we devise an algorithmic approach that allows us to derive a \PAC{}-aware variant from almost any bottom-up traversal analysis (according to its previous definition)\footnote{In fact, the  approach is independent of the concrete interpretation $\interpretation$ chosen, as long as for $\RR_{\geq0}$-interpretation it involves solely the operations $+$, $-$, $\max$, $\min$ and $\cdot$, or \emph{any combination} thereof.} as long as it holds $\type \subseteq \RR_{\geq 0}$ for the domain~$\type$. To this end, we use the notation $x_i \pm \error[i] \coloneqq [\max(0,x_i-\error[i]),x_i+\error[i]]$\footnote{For success probabilities, we set $x_i \pm \error[i] \coloneqq
  [\max(0,x_i-\error[i]),\min(x_i+\error[i],1)]$}.

We propagate the uncertainty along with the nodes during the bottom-up traversal analysis resembling interval arithmetics from canonical error analysis. Whenever the interpretation uses one of the operators $+$, $-$, $\cdot$, $\max$, $\min$ over $\RR_{\geq 0}$, for its \PAC{}-aware variant, we additionally compute a new uncertainty bound and uncertainty probability for the resulting value. The nodes' uncertainty bound and uncertainty probability are solely based on its operands' uncertainty bounds and uncertainty probabilities.

\para{\PAC{}-Bounds Propagation} Without any assumptions on the distribution of the uncertainty, operations are as follows:
  Given values $x_1, x_2 \in [0,1]$ that are $(\error[i],\prob[i])$-\PAC{}, the uncertainty of
  \begin{enumerate}[label=(\subscript{R}{{\arabic*}}),itemindent=1em]
  \item $1-x_i$ can be bound by $\error = \error[i]$.
  \item $x_1 \cdot x_2$ can be bound by $\error = x_1\cdot \error[2] + x_2 \cdot \error[1] + \error[1]\cdot\error[2]$.
  \item $x_1 + x_2 - x_1 \cdot x_2$ can be bound by  $\error = \error[1] + \error[2] + x_1\cdot \error[2] + x_2 \cdot
    \error[1] + \error[1]\cdot\error[2]$
  \item $\max(x_1,x_2)$ can be bound by $\error=\max(\error[1],\error[2])$
  \item $\min(x_1,x_2)$ can be bound by $\error=\max(\error[1],\error[2])$
  \end{enumerate}
  For all operations the uncertainty probability $\delta$ is (at most) $1 - (1-\prob[1]) \cdot (1-\prob[2]) $, if all $x_i$ are independent, and otherwise (at most) $\prob[1] + \prob[2]$. As probability analysis relies on independence of the basic events, we will use the first bound in the following.

  \medskip\noindent In summary, the result of each of the given operations is $(\error,\prob)$-\PAC{}. Furthermore, each constant~$c \in \RR$ is (0,0)-\PAC{}.

\begin{example}[PAC-Bounds Computation for Probabilities]
  Let $\ADT$ be an attack-defense tree with set $\vertices$ of nodes, let $\values[\basicEvents]$ be a \PAC{} probability valuation for basic events, i.e., for every basic event~$\basicEvent \in \basicEvents$, it holds $\values[\basicEvents](\basicEvent)$ is $(\error[\basicEvent],\prob[\basicEvent])$-\PAC{} for some $\error[\basicEvent] \in \RR_{\geq 0}$ and $\prob[\basicEvent] \in [0,1]$. The  uncertainty for $\vertex \in \vertices$ is computed as follows:
  \begin{enumerate}
  \item $\vertex \in \basicEvents$ and $\vertex$ is not triggerable: return $(\error[\basicEvent],\prob[\basicEvent])$.
  \item $\typeFunction(\vertex)=\opAnd$ and $\inp[1],\inp[2]$ are its inputs: recursively compute $(\error[\inp[1]],\prob[\inp[1]])$ and $(\error[\inp[2]],\prob[\inp[2]])$. Use rule for multiplication \subscript{R}{2}, i.e., return  $\error= \values(\inp[1]) \cdot \error[\inp[2]] + \values(\inp[2]) \cdot \error[\inp[1]] + \error[\inp[1]]\cdot\error[\inp[2]]$ and $\prob = 1 - (1-\prob[\inp[1]]) \cdot (1-\prob[\inp[2]]) $
  \item $\typeFunction(\vertex)=\opOr$ and $\inp[1],\inp[2]$ are its inputs: recursively compute $(\error[\inp[1]],\prob[\inp[1]])$ and $(\error[\inp[2]],\prob[\inp[2]])$. Use rule \subscript{R}{3}, i.e., return
    $\error= \error[\inp[1]]+ \error[\inp[2]] + \values(\inp[1]) \cdot \error[\inp[2]] + \values(\inp[2]) \cdot \error[\inp[1]] + \error[\inp[1]]\cdot\error[\inp[2]]$ and $\prob = 1 - (1-\prob[\inp[1]]) \cdot (1-\prob[\inp[2]]) $
  \item $\typeFunction(\vertex)=\opNeg$ and $\inp$ is its input: recursively compute $(\error[\inp],\prob[\inp])$. Use rule \subscript{R}{1}, i.e., return $\error= \error[\inp]$ and $\prob = \prob[\inp]$
  \end{enumerate}
\end{example}
Formally, we can state that the resulting interval is the tightest reflecting the input intervals: Let $\interpretation(\vertex) \in x_{\vertex} \pm \epsilon_{\vertex}$ for each $\vertex \in \basicEvents$ and $\gate \in \gates$ then $\interpretation(\gate)$ lies in the interval computed above. Vice versa, for every $\interpretation(\gate)$ in the interval, there are $\interpretation(\vertex) \in x_{\vertex} \pm \epsilon_{\vertex}$ (consistent with $\interpretation(\gate)$). Furthermore, with $\delta_{\vertex}$ denoting the probability that $\interpretation(\vertex) \notin x_{\vertex} \pm \epsilon_{\vertex}$, the probability, that the resulting value is outside of the computed \PAC~interval, is at most the computed $\delta$.

In \quadtool{}, we also provide \PAC{}-aware bottom-up analyses for delays and costs. The proof of correctness can be found in \Fref{app:pac}.

\begin{example}[Propagation of Uncertainty]
	In practice, $\delta = 0.05$ or $\delta=0.01$ seem reasonably small uncertainty probabilities for meaningful \PAC{} data (since they are most commonly used significance criteria). As most examples from literature have less than 30 nodes (s. \Fref{sec:experiments}), a corresponding binary tree would have a depth of 5.  This would lead to $\delta \approx 0.56$ or  $\delta \approx 0.15$ for an inital $\delta$ of $0.05$ or $0.01$ at the root node respectively. Thus, selecting data of high confidence i.e. low $\delta$ is crucial for good overall estimates.
\end{example}
%
%

\begin{figure*}[t]
	\centering
	\begin{subfigure}{0.35\linewidth}	
		\begin{itemize}[label={}]
			\item ID 10 $p: 0.4594491$
			\item \begin{itemize}[label={}]
				\item ID 19 $p: 0.9188982$
				\item \begin{itemize}[label={}]
					\item ID 4 $p: 0.92818$
					\item \begin{itemize}[label={}]
						\item ID 1 $p: 0.24$
						\item ID 2 $p: 0.65$
						\item ID 3 $p: 0.73$
					\end{itemize}
					\item ID 9 $p: 0.99$
				\end{itemize}
				\item ID 17 $p: 0.5$
				\item \begin{itemize}[label={}]
					\item ID 6 $p: 0.5$
				\end{itemize}
			\end{itemize}
		\end{itemize}
		\vspace{-1.5em}
		\caption{}
		\label{fig:subfigB}
	\end{subfigure}
	\begin{subfigure}{0.64\linewidth}
		\begin{itemize}[label={}]
			\item ID 10 $p: 0.456463\;\epsilon: 0.130461\;\delta: 0.226219$
			\item \begin{itemize}[label={}]
				\item ID 19 $p: 0.924015\;\epsilon: 0.193929\;\delta: 0.185494$
				\item \begin{itemize}[label={}]
					\item ID 4 $p: 0.930528\;\epsilon: 0.1894165\;\delta: 0.142625$
					\item \begin{itemize}[label={}]
						\item ID 1 $p: 0.233\;\epsilon: 0.026214\;\delta: 0.05$
						\item ID 2 $p: 0.667\;\epsilon: 0.029225\;\delta: 0.05$
						\item ID 3 $p: 0.728\;\epsilon: 0.027594\;\delta: 0.05$
					\end{itemize}
					\item ID 9 $p: 0.993\;\epsilon: 0.005170\;\delta: 0.05$
				\end{itemize}
				\item ID 17 $p: 0.494\;\epsilon: 0.031003\;\delta: 0.05$
				\item \begin{itemize}[label={}]
					\item ID 6 $p: 0.506\;\epsilon: 0.031003\;\delta: 0.05$
				\end{itemize}
			\end{itemize}
		\end{itemize}
		\vspace{-1.5em}
		\caption{}
		\label{fig:subfigC}
	\end{subfigure}
	\caption{Example evaluation of the model in \Fref{fig:toyexample} (a) without \PAC{} values and (b) with \PAC{}-values. The \PAC{}-values were generated by sampling from Bernoulli distributions parameterized by the exact value.}
	\label{fig:examplevaluation}
\end{figure*}

\subsection{Analysis Example}
We use one of the models introduced by Fila and Widel in \cite{fila2019}(s. \Fref{fig:toyexample}) about manipulating a power meter as example for the analysis workflow with \PAC{} values. 
We will use the domain of probability and assign each of the basic events a success probability from the paper (see \Fref{fig:examplevaluation} (a)). For the defender node with id 6 we set the probability to 50\% to represent, that only half of the power meters have this function enabled.

Using the exact properties the probability for a successful attack is about 46\% (see \Fref{fig:examplevaluation} (a)). To generate the \PAC{} values we use R 4.3.1 to sample a Bernoulli distribution, with the exact values as parameter, 1000 times. From these values we use the built-in calculation of our tool to retrieve the \PAC{} values using a Gaussian-Estimator.

The propagated values are displayed in \Fref{fig:examplevaluation} (b). The calculations in the tool are done with arbitrary precision. The intermediate values are only rounded for display. Due to the initial confidence level of 5\% the confidence level of the overall analysis is about 77\%. Taking the $\epsilon$ in account the analysis returns an estimate of the true probability with confidence interval of about 45.65\% [33.61\%-57.69\%]. So the true value calculated in the exact example lies inside the bound generated by our \PAC{}-propagation. 

This extended analysis does, however, come with an increased computational effort, which still is inside acceptable bounds (see \Fref{sec:experiments}).


\section{Experiments}
\label{sec:experiments}
We discuss the performance of~$\quadtool$ on our collection of benchmarks found on \atbest{}.

For performance measurements, we used 42 security models, ranging from 5 to 30 nodes, from existing literature (s. \Fref{sec:atbest}), allowing the whole operator set. 17 of them contained operators, that were incompatible with \XML{}. Therefore, they were only used for the remaining formats. We executed the conversion 50 times each on a computer with Intel\textregistered Core\texttrademark i7 processor and 16 GB RAM.  


\begin{figure*}[t]
	\vspace*{-1em}
	\centerline{\includegraphics[width=\textwidth]{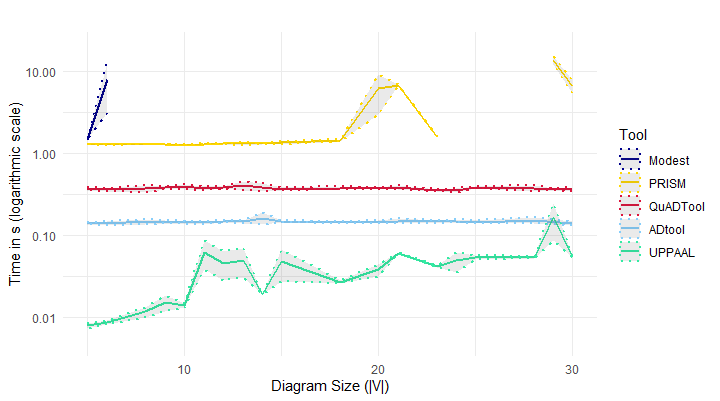}}
	\caption{95\%-confidence intervals of time (logarithmic scale) necessary for the corresponding model checker to evaluate models (for the success probability) depicted by the ADT size.
	Most executions take less than 10 seconds. Y-position of line start corresponds with position in legend. As most models from \atbest{} contain mostly player nodes, our timed analysis using \modest{} returns stack overflow errors on models of $size \geq 5$. \prismgames{} runtime seems to be more dependent on model composition, rather than just the size. The evaluation time for \quadtool{}, \adtool{} and \UPPAAL{} Cora are all below 1 second.}
	\label{fig:tooleval}
	\vspace*{-1em}
\end{figure*}

\begin{figure*}[t]
	\centerline{\includegraphics[width=\textwidth]{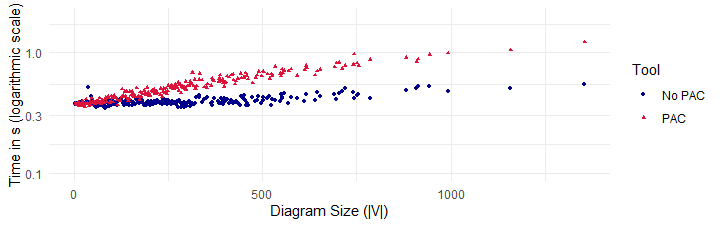}}
	\caption{Model checking times (mean per size) \PAC{} (red triangles) vs. non-\PAC{} (blue dots) models in \quadtool. The highest execution time was 1.2s for 1355 nodes. }
	\label{fig:toolspac}
	\vspace*{-1em}
\end{figure*}

In the next step, we conducted experiments on how long the model checkers \adtool{}, \UPPAAL{}, \prismgames{} and \modest{} need to analyze the \adt{} success and failure probability on the models previously used for conversion. Every execution was repeated five times.

Lastly, we compare the execution time of \PAC{} analysis with only propagating the probability. Thus, we used 636 model generated randomly only having inner nodes with $\opAnd$ and $\opOr$ with a size of up to 1355 nodes, since literature examples are limited in amount and size. We calculate the success probability once with and once without \PAC{} values.



 \para{Results} are depicted in \Fref{fig:tooleval} and \ref{fig:toolspac}.  Our experiments show the following:
 \begin{itemize}
 \item Exports to the model checkers
   \UPPAAL{}, \modest{}, \prism{} and to the \XML{} format are
   feasible for models of sizes commonly found in literature. With all except \prism{} needing less then a second for each model (see \Fref{app:conversionTimes}).
 \item Generating the stochastic game explicitly, which the
   translation to \prism{} relies on, is time-consuming for models
   with more than~30 nodes (see \Fref{app:conversionTimes}). Yet, verification using \prism{} may
   still be of use in practice. 
 \item The execution time may depend more on the model composition
   than on the actual size of the underlying tree (see
   \Fref{fig:tooleval}, \ref{fig:toolspac}).
 \item Verification using the \adtool{} and \UPPAAL{} is faster than using \prismgames{} or \modest{}. If evaluation is possible, the results are returned in less then ten seconds (see \Fref{fig:tooleval}). 
 \item  \UPPAAL{} model checking is useful for more complex objectives, posing an appropriate trade off. 
 \item For \PAC{} vs.\ non-\PAC{} analysis in \quadtool, the slope of runtime differs roughly by a factor 10. None of the approaches takes more than a few seconds  (see \Fref{fig:toolspac}) even for attack-defense trees with more than 1000~vertices.
 \item The runtime of \adtool{} and \quadtool{} using bottom-up analysis are in the same order in our experiments (see \Fref{fig:tooleval}).
\end{itemize}



\section{Conclusion and Future Work}\label{sec:conclusion} 
We have provided a simple tool for synthesis and analysis of attack-defense trees that allows connection to model-checker workflows. The CLI enables further integration into other software pipelines. Settings for various degrees of background knowledge allow also non-experts to access the field of $\ADT{}$.
Furthermore, we provide a simple but effective way to bound the uncertainty in quantitative attack-defense tree analysis, using probably approximately correct inputs \emph{and} outputs, obtaining uncertainty from data. 
Our experiments show that all analysis can be done in reasonable time even for models much larger than found in literature. 
Altogether, \quadtool{} presents several steps making the $\ADT{}$ framework more applicable in practice.

\para{Future Work.} 
Firstly, we aim to support larger models and models with heterogeneous subtrees.  To this end, we want to design and implement compositional verification techniques to efficiently analyze, for instance, models where one subtree consists of only $\opAnd$ and $\opOr$ and one player, another requiring full-fledged timed analysis and a third one games.

Secondly, the semantics in terms of stochastic timed automata and stochastic turn-based games suffer from a  blow-up since all sequences of events need to be taken into account. One could employ partial-order reduction techniques in the conversion before model checking to reduce the model size.
%

\para{Acknowledgments.} We want to thank everyone who has contributed case studies to our benchmark collection \atbest{} so far.
This research was funded in part by the German Research Foundation (DFG) project No.~427755713 \emph{GOPro}, the MUNI Award in Science and Humanities {MUNI/I/1757/2021} of the Grant Agency of Masaryk University, the  German Academic Scholarship Foundation and the ProSec project.


\appendix
\pagebreak

\section{Quantitative Bottom-Up Analyses for Cost and Delays}
\label{app:examplesBottomUp}
\begin{example}[Delay \adt{}]
	\label{def:delay}
	We define both the minimal and maximal delay interpretation~$\interpretation$ with $\type=\RR_{\geq0}\times\RR_{\geq0}$, where the first component represents the delay of succeeding, the second component the delay of failing
	\begin{itemize}
		\item
		$\displaystyle{\interpretation(\vertex)((x_1,x_2),(y_1,y_2)) = (\max(x_1,y_1),\min(x_2,y_2))}$ (for minimal delay) and\\ $(\max(x_1,y_1).\max(x_2,y_2))$ (for maximal delay), if 
		$\typeFunction(\vertex)=\opAnd$.
		\item
		$\displaystyle{\interpretation(\vertex)((x_1,x_2),(y_1,y_2))= (\min(x_1,y_1),\max(x_2,y_2))}$ (for minimal delay) and \\ $\displaystyle{\interpretation(\vertex)((x_1,x_2),(y_1,y_2))= (\max(x_1,y_1),\max(x_2,y_2))}$ (for maximal delay) \\ if $\typeFunction(\vertex)=\opOr$
		\item $\displaystyle{\interpretation(\opNeg)(x_1,x_2)=(x_2,x_1)}$
	\end{itemize}
\end{example}

\begin{example}[Cost \adt{}]
	\label{def:cost}
	We define both the minimal and maximal cost interpretation~$\interpretation$ with $\type=\RR_{\geq 0}\times \RR_{\geq0}$, where the first component represents the cost of succeeding, the second component the cost of failing.
	\begin{itemize}
		\item
		$\displaystyle{\interpretation(\vertex)((x_1,x_2),(y_1,y_2))= (x_1+y_1, \min(x_2,y_2))}$ (for minimal cost) and $(x_1+y_1,\max(x_2,y_2))$ (for maximal cost), if 
		if $\typeFunction(\vertex)=\opAnd$
		\item
		$\displaystyle{\interpretation(\vertex)((x_1,x_2),(y_1,y_2))=(\min(x_1,y_1),x_2+y_2)}$ (for minimal cost) and   \\ $\displaystyle{\interpretation(\vertex)((x_1,x_2),(y_1,y_2))=(\max(x_1,y_1),x_2+y_2)}$ (for maximal cost) if $\typeFunction(\vertex)=\opOr$
		\item $\displaystyle{\interpretation(\opNeg)(x_1,x_2)=(x_2,x_1)}$
	\end{itemize}
\end{example}

\section{Conversion Times}
\label{app:conversionTimes}
\begin{figure*}[htbp]
	\centerline{\includegraphics[width=\linewidth]{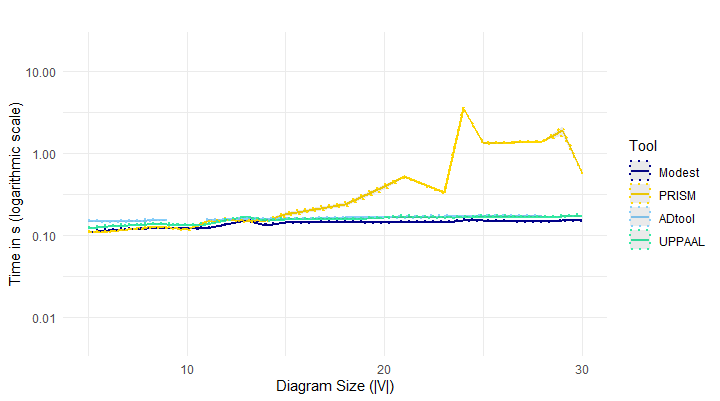}}
	\caption{Execution times for conversion of \DOT{}-files to the other formats.}
	\label{fig:conversion}
\end{figure*} 

\section{Correctness of Bottom-Up Traversal Semantics}
\label{app:correctness}
The following section recalls definitions from~\cite{kordy2011} and
adapts them to our attack-defense tree setting. \emph{We only allow
  trees here (no DAGs).}  We include the operator~$\opTrigger$ only if
every basic event is triggered by at most one vertex. Additionally, we
require that vertices labeled with $\opTrigger$ have no successors,
and the main tree and the subtrees rooted in vertices with
$\opTrigger$ to be mutually disjoint. In this case, we can ensure an
order satisfying the constraints imposed by~$\opTrigger$ for each
satisfying assignment.  In contrast to~\cite{kordy2011}, we allow
$\opTrigger$ as a restricted variant of $\opSand$\footnote{We use
  $\opSand$ to represent a timed order and $\opTrigger$ to represent a
  logic order.} and use $\opNeg$ rather than player swapping to focus
on the player \emph{controlling} an event rather than whether an event
is \emph{useful} or \emph{hindering} for an attack. Additionally, we
assume that every basic event can only occur once in the tree. Thus,
we restrict the multiset semantics of~\cite{mauw2005,kordy2011} to a
powerset semantics.

\begin{definition}[Extension of Example~5 -~7]
  In each case, $\interpretation(\opTrigger) = \Id$ and
  $\interpretation(\basicEvent) = \interpretation(\opAnd)$ for
  triggerable basic events $\basicEvent \in \basicEvents$.
\end{definition}

\noindent Let $X$ be a set and let $X_1, X_2 \subseteq \powerset{X}$,
i.e. $X_1$ and $X_2$ are sets of sets with elements in $X$. We define
$X_1 \otimes X_2 \coloneqq \bigcup_{(x_1,x_2) \in X_{11} \times
  X_{21}} \set{x_1\cup x_2}$. To restrict a set of basic events~$X$ to
the basic events occurring in the subtree rooted in a
vertex~$\vertex$, we write~$X|_{\vertex}$.

\begin{definition}[Powerset Attack-Defense Trees]
  \label{def:powerset}
  We consider the operators $\opAnd$,$\opOr$, $\opTrigger$ and
  $\opNeg$ and let all basic events be in
  $\basicAttacks \disjointUnion \basicDefenses$. We define a powerset
  interpretation~$\interpretation$ with
  $\type=\powerset{\powerset{\basicEvents}} \times \powerset{\powerset{\basicEvents}}$.
  \begin{itemize}
  \item
    $\interpretation(\vertex)((X_{11},X_{12}),(X_{21},X_{22}))=
      (X_{11}\otimes X_{21},$ $X_{21} \cup X_{22} \cup (X_{21} \otimes
      X_{22}) \cup (X_{12} \otimes X_{21}) \cup (X_{11} \otimes X_{22}))$ if $\vertex$ is a
    triggerable basic event \emph{or} $\typeFunction(\vertex)=\opAnd$
  \item
    $\interpretation(\vertex)((X_{11},X_{12}),(X_{21},X_{22}))=
      ((X_{11} \cup X_{21} \cup (X_{11} \otimes X_{21}) \cup$ $(X_{12}
      \otimes X_{21}) \cup (X_{11} \otimes X_{22})), (X_{12}\otimes
      X_{22}))$ for $\typeFunction(\vertex)=\opOr$
  \item $\displaystyle{\interpretation(\opNeg)((X_1,X_2))=(X_2,X_1)}$ 
  \item $\displaystyle{\interpretation(\opTrigger)=\Id}$
  \end{itemize}
  We denote the powerset bottom-up traversal semantics by~$\values<P>$.
\end{definition}

The powerset bottom-up traversal semantics computes two sets of
sets. In the first component, we collect all satisfying
assignments, in the second one all unsatisfying
assignments. Please note that we only denote events, which need to be
set to true. All events not contained, need to be set to false.

To compute successful and unsuccessful attacks at once, we
need to set
$\values[\basicEvents]<P>(\basicEvent) \coloneqq
(\set{\set{\basicEvent}},\set{\emptyset})$. For the powerset bottom-up
traversal semantics, we implicitely assume this valuation for basic
events in the following.

\begin{definition}[Boolean Attack-Defense Trees]
  \label{def:boolean}
  We consider the operators $\opAnd$,$\opOr$, $\opTrigger$ and
  $\opNeg$ and let all basic events be in
  $\basicAttacks \disjointUnion \basicDefenses$. We define a Boolean
  interpretation~$\interpretation$ with
  $\type=\set{\true,\false}$
  \begin{itemize}
  \item
    $\displaystyle{\interpretation(\vertex) (x_1,x_2)= \land
    }$, if $\vertex$ is a triggerable basic event \emph{or}
    $\typeFunction(\vertex)=\opAnd$
  \item $\displaystyle{\interpretation(\vertex)=\lor}$ for
    $\typeFunction(\vertex)=\opOr$
  \item $\displaystyle{\interpretation(\opNeg)=\neg}$
  \item $\displaystyle{\interpretation(\opTrigger)=\Id}$
  \end{itemize}
  We denote the Boolean bottom-up traversal semantics by~$\values<B>$.
\end{definition}

Let $X \subseteq \basicEvents$. In the following, we denote by
$\values[\basicEvents,X]<B>$ the Boolean valuation such that
$\values[\basicEvents,X]<B>(\basicEvent)=\true$ if $\basicEvent \in X$
and false, otherwise.

\begin{theorem}[Powerset and Boolean Attack-Defense Trees]
  \label{theo:correctness}
  Let $\exampleADD$ be an attack-defense tree and let
  $\values<P>(\vertex) = (X_1,X_2)$.
  \begin{enumerate}
  \item $x_1 \in X_1$ iff $\values<B>(\vertex)=\true$
    w.r.t. $\values[\basicEvents,x_1]<B>$
  \item $x_2 \in X_2$ iff $\values<B>(\vertex)=\false$
    w.r.t. $\values[\basicEvents,x_2]<B>$
  \end{enumerate}
\end{theorem}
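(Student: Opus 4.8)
The plan is to prove both items by one \emph{structural induction on the subtree rooted in $\vertex$} (equivalently, on the height of $\vertex$). Throughout I identify a Boolean assignment with the set of basic events it sends to $\true$, and I note that, since the subtree rooted in $\vertex$ only mentions basic events in $\basicEvents|_{\vertex}$, the value $\values<B>(\vertex)$ depends on a valuation only through its restriction to $\basicEvents|_{\vertex}$; accordingly I only ever quantify over assignments that are subsets of $\basicEvents|_{\vertex}$. I would strengthen the induction hypothesis by two bookkeeping invariants: (i)~every element of $X_1$ and of $X_2$ is a subset of $\basicEvents|_{\vertex}$; and (ii)~the pair $(X_1,X_2)$ \emph{partitions} $\powerset{\basicEvents|_{\vertex}}$. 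Invariant~(ii) is just the classical fact that a Boolean formula takes exactly one of the values $\true,\false$ under any assignment; once it is in place, item~2 follows from item~1, since $x_2\in X_2$ iff $x_2\notin X_1$ iff $\values<B>(\vertex)\neq\true$ iff $\values<B>(\vertex)=\false$. So the substance is to prove item~1 while propagating (i) and (ii).

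For the base case $\vertex\in\basicEvents$ non-triggerable we have $\values<P>(\vertex)=(\set{\set{\vertex}},\set{\emptyset})$ and $\basicEvents|_{\vertex}=\set{\vertex}$, so both invariants are immediate and $x_1\in\set{\set{\vertex}}$ iff $\vertex\in x_1$ iff $\values[\basicEvents,x_1]<B>(\vertex)=\true$. For the two easy inductive cases: if $\typeFunction(\vertex)=\opNeg$ with input $\inp$ then $\basicEvents|_{\vertex}=\basicEvents|_{\inp}$, the interpretation $\interpretation(\opNeg)$ swaps the two components, and $\interpretation(\opNeg)=\neg$ Boolean-wise, so item~1 and both invariants are inherited directly from the hypothesis at $\inp$; if $\typeFunction(\vertex)=\opTrigger$ then $\interpretation(\opTrigger)=\Id$ and the whole statement is inherited unchanged from the unique input, the side-conditions on $\opTrigger$ in \Fref{app:correctness} (no successors, and trigger-subtrees disjoint from the main tree) guaranteeing that $\basicEvents|_{\vertex}$ is unchanged and that the static interpretation is meaningful.

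The only combinatorial case is $\typeFunction(\vertex)=\opAnd$ with inputs $\inp[1],\inp[2]$ (a triggerable basic event is treated identically, its two operands being its own value $(\set{\set{\vertex}},\set{\emptyset})$ and the value of the unique vertex triggering it). Because the underlying structure is a \emph{tree}, $\basicEvents|_{\inp[1]}$ and $\basicEvents|_{\inp[2]}$ are disjoint, so every assignment $x$ over $\basicEvents|_{\vertex}$ decomposes \emph{uniquely} as $x=x^{(1)}\disjointUnion x^{(2)}$ with $x^{(j)}$ over $\basicEvents|_{\inp[j]}$, and the restriction of $\values[\basicEvents,x]<B>$ to $\basicEvents|_{\inp[j]}$ is exactly $\values[\basicEvents,x^{(j)}]<B>$. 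Using invariant~(i) at the children one checks the pivotal identity: $x\in X_{11}\otimes X_{21}$ iff $x^{(1)}\in X_{11}$ and $x^{(2)}\in X_{21}$. Combining this with the hypothesis for $\inp[1],\inp[2]$ and with $\values<B>(\vertex)=\values<B>(\inp[1])\land\values<B>(\inp[2])$ yields item~1. To propagate invariant~(ii) I would check that the second component prescribed by \Fref{def:powerset} equals $\powerset{\basicEvents|_{\vertex}}\setminus(X_{11}\otimes X_{21})$, namely the union of the products $X_{12}\otimes X_{21}$, $X_{11}\otimes X_{22}$, $X_{12}\otimes X_{22}$ over the failing configurations of an $\opAnd$, using invariant~(ii) at the children to express non-membership in $X_{11}\otimes X_{21}$. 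The case $\typeFunction(\vertex)=\opOr$ is entirely dual.

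\textbf{Main obstacle.} The delicate part is the invariant~(ii) bookkeeping in the $\opAnd$ and $\opOr$ cases: one has to unfold the $\otimes$ and $\cup$ expressions in \Fref{def:powerset} and match them term-by-term with the Boolean truth tables for ``gate fails'', all while keeping the unique-decomposition-of-assignments argument watertight, and that is precisely where the restriction to \emph{trees} rather than DAGs, together with the disjointness conditions on $\opTrigger$-subtrees, gets used. Everything else (the base case, $\opNeg$, $\opTrigger$, and the passage from item~1 to item~2) is routine once the invariants are set up.
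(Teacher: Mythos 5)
Your proposal is correct and follows essentially the same route as the paper's own proof: a structural induction over the tree with a case split on the operator labelling $\vertex$, treating a triggerable basic event as an implicit conjunction with its triggering vertex. The only organizational difference is that you derive item~2 from item~1 via the explicit partition invariant on $\powerset{\basicEvents|_{\vertex}}$ (and make the disjoint-decomposition of assignments at $\opAnd$/$\opOr$ nodes explicit), whereas the paper proves both claims simultaneously in the induction; your version is, if anything, the more carefully bookkept of the two.
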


\begin{proof}
  Let $\exampleADD$ be an attack-defense tree. We proof both claims
  simultaneously by induction on the structure of the attack-defense
  tree.
  \begin{itemize}
  \item $\vertex$ is a non-triggerable basic
    event. By definition $\values<B>(\vertex) = \true$ iff
    $\values[\basicEvents]<B>(\vertex)=\true$. Since
    $\values[\basicEvents]<P>(\vertex) = (\set{\set{\vertex}},
    \set{\emptyset})$, the claims holds true.
  \item \emph{Induction Hypothesis:} For all subtrees of $\vertex$
    rooted in $\vertex<\ast>$ holds:
    \begin{enumerate} 
    \item $x_1 \in X_1$ iff $\values<B>(\vertex<\ast>)=\true$
    w.r.t. $\values[\basicEvents,x_1]<B>$
  \item $x_2 \in X_2$ iff $\values<B>(\vertex<\ast>)=\false$
    w.r.t. $\values[\basicEvents,x_2]<B>$
    \end{enumerate}

  \item $\typeFunction(\vertex)=\opTrigger$: The claim holds by
    induction hypothesis and the definitions of $\values<P>$ and
    $\values<B>$ (identity function for vertices labelled by
    $\opTrigger$).
  \item $\vertex \in \basicEvents$ and triggered by
    $\vertex<\ast> \in \vertices$. Let
    $\values<P>(\vertex<\ast>)=(X_1^\ast,X_2^\ast)$. We have
    $\values[\basicEvents]<P>(\vertex)=(\set{\set{\vertex}},\set{\emptyset})$
    and
    $\values<B>(\vertex) = \values<B>(\vertex<\ast>) \land
    \values<B>(\vertex)$. Thus, every satisfying assignment of the
    subtree rooted in $\vertex<\ast>$ extended by $\vertex$ set to
    true is a satisfying assignment of $\vertex$. We apply the
    induction hypothesis to the subtree rooted in
    $\vertex<\ast>$. Hence, by definition of $\otimes$ and the
    induction hypothesis, Claim~(1) holds true. Additionally,
    $\values<B>(\vertex) = \false$ if $\vertex$ is set to false
    (i.e. $\values[\basicEvents]<B>(\vertex)=\false$) or
    $\values<B>(\vertex<\ast>)=\false$. Hence, by induction hypothesis
    and the definition of $\values<P>$ for triggerable basic events,
    Claim~(2) holds, too.
  \item $\typeFunction(\vertex)=\opAnd$: Analogously to the proof for
    triggerable basic events. However, the induction hypothesis needs
    to be used on both subtrees.
  \item $\typeFunction(\vertex)=\opOr$: Analogously to the proof for
    triggerable basic events. However, the induction hypothesis needs
    to be used for both subtrees and the proofs of the two claims need
    to be swapped.
  \item $\typeFunction(\vertex)=\opNeg$: Let $\vertex<\ast>$ be the input of
    $\vertex$ and let $\values<P>(\vertex<\ast>)=(X_1,X_2)$. Since
    $\values<B>(\vertex)=\true$ iff $\values<B>(\vertex<\ast>)=\false$ by
    definition and each satisfying assignment of $\vertex<\ast>$ is an
    unsatisfying assignment of $\vertex$ by definition, the claim
    follows from the induction hypothesis.
  \end{itemize}
  We have proven \Fref{theo:correctness} by induction.
\end{proof}

Let $\exampleADD$ be an attack-defense tree. We say that a
valuation~$\values[\basicEvents] \colon \basicEvents \to
\set{\true,\false}$ is a \emph{successful attack} if
$\val(\topGoal) = \true$, and \emph{unsuccessful attack}
otherwise\footnote{Dually, this can be done for defenses by assigning
  the goal~$\topGoal$ to the defender. Since we do not use player
  swapping as in~\cite{kordy2011}, there are no changes necessary to
  this or the following semantics to apply them to defenses. However,
  the valuation of the basic events might need to be modified (for
  instance, excluding costs of the attacker when computing costs of
  successful defenses). In the following, we only talk about attacks
  and do not explicitely state that our results are dually correct for
  defenses.}. We say a successful attack is \emph{minimal} if there
does not exist an basic event~$\basicEvent$ such that
$\values[\basicEvents](\basicEvent)=\true$ and the modified
valuation~$\values<\ast>$ where $\values[\basicEvents]<\ast>(\basicEvent)=\false$ result both in $\val(\topGoal) = \true$.

\begin{figure*}[!t]
  \normalsize
  \setcounter{MYtempeqncnt}{\value{equation}}
  
  \begin{equation}\label{eq:powerset}
    \begin{aligned}
    \values<\probabilityFunction>(\vertex)  =&  \sum_{x_1
        \in X_1} \prod_{\basicEvent \in x_1|_{\vertex}}
      \probabilityFunction(\basicEvent) \cdot \prod_{\basicEvent
        \in \basicEvents\setminus x_1} 1-
      \probabilityFunction(\basicEvent) =
      \\
       =&  1-\sum_{x_2|_{\vertex}
        \in X_2} \prod_{\basicEvent \in x_2|_{\vertex}}
      \probabilityFunction(\basicEvent) \cdot \prod_{\basicEvent \in
        \basicEvents \setminus x_2|_{\vertex}} 1-
      \probabilityFunction(\basicEvent)\\
    \values<\costFunction,\min>(\vertex) = &  (\min_{x_1
        \in X_1} \sum_{\basicEvent \in x_1}
      \costFunction(\basicEvent), \min_{x_2|_{\vertex}
        \in X_2} \sum_{\basicEvent \in x_2|_{\vertex}}
      \costFunction(\basicEvent))\\
    \values<\costFunction,\max>(\vertex) = & (\max_{x_1
        \in X_1} \sum_{\basicEvent \in x_1}
      \costFunction(\basicEvent),\max_{x_2|_{\vertex}
        \in X_2} \sum_{\basicEvent \in x_2|_{\vertex}}
      \costFunction(\basicEvent))\\
    \values<\delayFunction,\min>(\vertex) = &  (\min_{x_1
        \in X_1} \sum_{\basicEvent \in x_1}
      \delayFunction(\basicEvent),\min_{x_2|_{\vertex}
        \in X_2} \sum_{\basicEvent \in x_2|_{\vertex}}
      \delayFunction(\basicEvent)) \\
    \values<\delayFunction,\max>(\vertex) = & (\max_{x_1
        \in X_1} \sum_{\basicEvent \in x_1}
      \delayFunction(\basicEvent),\max_{x_1
        \in X_1} \sum_{\basicEvent \in x_1}
      \delayFunction(\basicEvent))
    \end{aligned}
  \end{equation}
  \setcounter{equation}{\value{MYtempeqncnt}}
  \hrulefill
  \vspace*{4pt}
\end{figure*}

\begin{theorem}[Correctness w.r.t. Boolean Semantics]
  \label{theo:correctnessplain}
  Let $\exampleADD$ be an attack-defense tree and
  $\values[\basicEvents]$ be a successful attack, let
  $\vertex \in \vertices$ and $\values<P>(\vertex) = (X_1,X_2)$. The
  equations in \Fref{eq:powerset} hold.
\end{theorem}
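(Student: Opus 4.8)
The plan is a structural induction on the attack-defense tree (restricted here, as in this appendix, to genuine trees over $\opAnd$, $\opOr$, $\opNeg$, $\opTrigger$), establishing the asserted equations simultaneously at every vertex $\vertex$ --- which is precisely what \Fref{theo:correctnessplain} claims. It is convenient to package the five claims uniformly: write $\values<P>(\vertex)=(X_1,X_2)$ and, for each of the five quantitative interpretations $q$ (probability, minimal/maximal cost, minimal/maximal delay), let $h_q(X_1,X_2)$ denote the expression on the right of the $q$-line of \Fref{eq:powerset}; then the goal is $\values<q>(\vertex)=h_q(X_1,X_2)$ for all $\vertex$ and all $q$. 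I would run the induction in the same well-founded order as the proof of \Fref{theo:correctness} (so a triggerable basic event is treated after the vertex triggering it), and I would lean on the structural hypotheses already imposed here: each basic event occurs exactly once, and the basic-event sets $\basicEvents|_{\inp[1]},\basicEvents|_{\inp[2]},\dots$ below the inputs of a gate are pairwise disjoint and jointly equal $\basicEvents|_\vertex$ (trigger-subtrees being disjoint from the main tree). These are exactly what make $\otimes$ behave like a Cartesian product: $(x_1,x_2)\mapsto x_1\cup x_2$ is a bijection from $A_1\times B_1$ onto $A_1\otimes B_1$, and any additive quantity summed over $x_1\cup x_2$ splits as the sum over $x_1$ plus the sum over $x_2$.

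The \emph{base case} (a non-triggerable basic event $\basicEvent$) is a direct check: $\values<P>(\basicEvent)=(\{\{\basicEvent\}\},\{\emptyset\})$, so $h_q$ collapses to a single term equal to the basic-event valuation --- $\probabilityFunction(\basicEvent)$ for probabilities (both forms in the first line of \Fref{eq:powerset} reduce to it, since $X_2=\{\emptyset\}$ gives $1-(1-\probabilityFunction(\basicEvent))$), and $(\costFunction(\basicEvent),0)$, resp.\ $(\delayFunction(\basicEvent),0)$, for the pair-valued domains. For the \emph{inductive step} at a gate $\vertex$ with input powerset values $A,B$, unfolding the recursion defining $\values<q>$ and applying the induction hypothesis gives $\values<q>(\vertex)=\mathsf{op}^q\bigl(h_q(A),h_q(B)\bigr)$, where $\mathsf{op}^q$ denotes the $q$-interpretation of the operator at $\vertex$, whereas the target equals $h_q\bigl(\mathsf{op}^P(A,B)\bigr)$ with $\mathsf{op}^P$ the powerset interpretation. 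Hence everything reduces to the purely algebraic ``commutation'' identity
\[
  \mathsf{op}^q\bigl(h_q(A),h_q(B)\bigr)\;=\;h_q\bigl(\mathsf{op}^P(A,B)\bigr),
\]
one instance per operator and per domain. Using the bijection above, together with the elementary facts that a product of finite sums equals the sum over the product of the index sets, that $\min$ and $\max$ of an additive quantity over a product set split into a componentwise sum of the separate optima, and that $\min$, $\max$ and disjoint sums distribute over the finite unions occurring in the powerset interpretations, each case is a short computation. For $\opAnd$ the succeeding component of $\mathsf{op}^P(A,B)$ is $A_1\otimes B_1$, so $h_{\probabilityFunction}$ multiplies and $\min_{x\in A_1}\sum_{\basicEvent\in x}\costFunction(\basicEvent)+\min_{x\in B_1}\sum_{\basicEvent\in x}\costFunction(\basicEvent)=\min_{x\in A_1\otimes B_1}\sum_{\basicEvent\in x}\costFunction(\basicEvent)$ matches $x_1+y_1$, and likewise for the other domains; $\opOr$ is dual (for probabilities one uses the $1-\ldots$ form of the first line); $\opNeg$ and $\opTrigger$ are immediate, as both interpretations merely swap, resp.\ keep, the pair; and a triggerable basic event is handled exactly like $\opAnd$ applied to $(\{\{\basicEvent\}\},\{\emptyset\})$ and the value of its triggering vertex, using that this basic event lies outside the triggering subtree. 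Once the $\sum_{X_1}$ form of the probability line holds at every vertex, its equivalent $1-\sum_{X_2}$ form is free: $X_1\disjointUnion X_2=\powerset{\basicEvents|_\vertex}$ by \Fref{theo:correctness}, and $\sum_{x\subseteq\basicEvents|_\vertex}\prod_{\basicEvent\in x}\probabilityFunction(\basicEvent)\prod_{\basicEvent\notin x}(1-\probabilityFunction(\basicEvent))=1$.

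I expect the main obstacle to be the \emph{failing (second) component} of the commutation identity for $\opAnd$ and $\opOr$: there $\mathsf{op}^P(A,B)$ returns not a single $\otimes$-product but a union of several $\otimes$-terms --- for $\opAnd$, the assignments making at least one input fail; for $\opOr$, those making both inputs fail --- and one must verify that optimizing the additive quantity over this whole union collapses to the compact local rule of Examples~\ref{def:cost} and \ref{def:delay} (e.g.\ failing cost $\min(x_2,y_2)$ for $\opAnd$, $x_2+y_2$ for $\opOr$). The key point is that the cheapest way to make an $\opAnd$ fail is to fail one child optimally and leave the other child's basic events untouched, so that mixed terms --- a success of one child combined with a failure of the other --- never beat the plain failing terms; this relies on the disjointness of the two blocks and on the convention that a basic event contributes its cost or delay only when set to true. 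A second point to pin down is the $\max$ variants: the local rules such as $\max(x_1,y_1)$ for an $\opOr$ correspond to the worst case over the $\subseteq$-minimal satisfying (resp.\ falsifying) assignments, not over all of them, so the $X_1,X_2$ in \Fref{eq:powerset} must be read with this restriction for those lines to hold; once that reading is fixed, the $\max$ cases follow by the same bookkeeping. (The hypothesis that $\values[\basicEvents]$ is a successful attack is not itself used by these identities.)
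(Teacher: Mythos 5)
Your proposal is correct and follows essentially the same route as the paper's own proof: a structural induction over the tree, handling each operator ($\opAnd$, $\opOr$, $\opNeg$, $\opTrigger$, triggerable basic events) by reducing the claim to per-domain algebraic identities between the local bottom-up rules and the powerset combinators, using disjointness of the subtrees' basic events so that $\otimes$ acts like a Cartesian product. If anything, you are more explicit than the paper, which dispatches the $\opAnd$/$\opOr$ and $\max$ cases with ``analogously'': your two flagged obstacles (collapsing the union in the failing component to the local $\min$/sum rules, and the fact that the $\max$ lines of \Fref{eq:powerset} only hold under a restricted, minimal-assignment reading of $X_1,X_2$) are genuine subtleties of the stated equations, and your treatment of them is sound.
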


\begin{proof}
  We prove all claims by an induction on the tree structure.
  \begin{itemize}
  \item $\vertex$ is a non-triggerable basic event. For each of the
    five functions, the value assigned to the basic event is
    returned. since
    $(X_1|_{\vertex}, X_2|_{\vertex}) =
    (\set{\set{\basicEvent}},\set{\emptyset})$, the claims hold by
    definition.
  \item \emph{Induction Hypothesis:} For subtrees rooted in a
    vertex~$\vertex<\ast>$, which are predecessors of the
    vertex~$\vertex$ , the equations in \Fref{eq:powerset} hold where
    $\vertex$ is instantiated with the specific~$\vertex<\ast>$.
  \item $\typeFunction(\vertex)=\opTrigger$: The claims follow by
    definition of $\values<\probabilityFunction>$,
    $\values<\costFunction,\min>$, $\values<\costFunction,\max>$,
    $\values<\delayFunction,\min>$ and $\values<\delayFunction,\max>$
    and the induction hypothesis (since $\opTrigger$ maps to the
    identity function in all cases).
  \item $\typeFunction(\vertex)=\opNeg$: Let $\vertex<\ast>$ be the
    input of vertex. The claims for $\values<\costFunction,\min>$,
    $\values<\costFunction,\max>$, $\values<\delayFunction,\min>$ and
    $\values<\delayFunction,\max>$ follow by their definition and the
    induction hypothesis (since $\opNeg$ just swaps the components in
    these cases). For $\values<\probabilityFunction>$, it holds
    $\values<\probabilityFunction>(\vertex) =
    1-\values<\probabilityFunction>(\vertex<\ast>)$. By induction
    hypothesis, we have
    \begin{itemize}
    \item
      $1-\values<\probabilityFunction>(\vertex<\ast>) = 1 - \sum_{x_1
        \in X_1|_{\vertex<\ast>}} \prod_{\basicEvent \in x_1}
      \probabilityFunction(\basicEvent) \cdot \prod_{\basicEvent \in
        \basicEvents\setminus x_1} 1-
      \probabilityFunction(\basicEvent)$. By the definition of
      $\values<P>$ for $\opNeg$, this is equivalent to
      $1 - \sum_{x_2 \in X_2|_{\vertex<\ast>}} \prod_{\basicEvent \in
        x_2} \probabilityFunction(\basicEvent) \cdot
      \prod_{\basicEvent \in \basicEvents\setminus x_2|_{\vertex}} 1-
      \probabilityFunction(\basicEvent)$.
    \item
      $1-\values<\probabilityFunction>(\vertex<\ast>) = 1 -
      (1-\sum_{x_2 \in X_2|_{\vertex<\ast>}} \prod_{\basicEvent \in x_2}
      \probabilityFunction(\basicEvent) \cdot \prod_{\basicEvent \in
        \basicEvents \setminus x_2} 1-
      \probabilityFunction(\basicEvent))$. By definition of
      $\values<P>$ for $\opNeg$, this is equivalent to
      $\sum_{x_1 \in X_1} 1-\prod_{\basicEvent \in \basicEvents
        \setminus x_1} \probabilityFunction(\basicEvent) = \sum_{x_1
        \in X_1} \prod_{\basicEvent \in x_1}
      \probabilityFunction(\basicEvent)$
    \end{itemize}
  \item $\vertex$ is a triggerable basic event and triggered by
    $\vertex<\ast>$. Let $(X_1,X_2)=\values<P>(\vertex)$. We apply the
    induction hypothesis to $\vertex<\ast>$.
    \begin{itemize}
    \item Let $(x.y) =
      \values<\costFunction,\min>(\vertex<\ast>)$. Then,
      $\values<\costFunction,\min>(\vertex) = (x +
      \costFunction(\vertex), \min(y,\costFunction(\vertex)))$, which
      is the same as
      $(\min_{x_1 \in X_1|_{\vertex<\ast>}} \sum_{\basicEvent \in x_1}
      \costFunction(\basicEvent) + \costFunction(\vertex),
      \min(y,\costFunction(\vertex)))$. That~is
      $(\min_{x_1 \in X_1} \sum_{\basicEvent \in x_1}
        \costFunction(\basicEvent), \min_{x_2 \in X_2}
        \sum_{\basicEvent \in x_2} \costFunction(\basicEvent))$ by
      the definition of $\values<P>$ for triggerable basic events (in
      the first component, we add the vertex to all satisfying
      assignments of the input, in the second component, we have a
      union of all unsatisfying assignments, so the minimal element
      can be computed this way).
    \item Analogously, we can prove the claims for
      $\values<\costFunction,\max>$, $\values<\delayFunction,\min>$
      and $\values<\delayFunction,\max>$
    \item
      $\values<\probabilityFunction>(\vertex) =
      \values<\probabilityFunction>(\vertex<\ast>) \cdot
      \probabilityFunction(\vertex)$. By induction hypothesis, this is
      equivalent to
      $(\sum_{x_1 \in X_1|_{\vertex<\ast>}} \prod_{\basicEvent \in
        x_1} \probabilityFunction(\basicEvent) \cdot
      \prod_{\basicEvent \in \basicEvents\setminus x_1} 1-
      \probabilityFunction(\basicEvent)) \cdot
      \probabilityFunction(\vertex) = \sum_{x_1 \in X_1}
      \prod_{\basicEvent \in x_1} \probabilityFunction(\basicEvent)
      \cdot \prod_{\basicEvent \in \basicEvents\setminus x_1} 1-
      \probabilityFunction(\basicEvent)$. On the other hand, \newline
      $\displaystyle{1-\sum_{x_2 \in X_2} \prod_{\basicEvent \in x_2}
        \probabilityFunction(\basicEvent) \cdot \prod_{\basicEvent \in
          \basicEvents\setminus x_2} 1-
        \probabilityFunction(\basicEvent)}$ \newline is equivalent to
      $1-(k \cdot (1-\probabilityFunction(\vertex)) + k \cdot
      \probabilityFunction(\vertex) + (1-k) \cdot
      (1-\probabilityFunction(\vertex)))$ where
      $k=\sum_{x_2 \in X_2|_{\vertex<\ast>}} \prod_{\basicEvent \in
        x_2} \probabilityFunction(\basicEvent) \cdot
      \prod_{\basicEvent \in \basicEvents\setminus x_2} 1-
      \probabilityFunction(\basicEvent)$  by definition of $\values<P>$
      for triggerable basic events. We have $1-(k -
      k\cdot\probabilityFunction(\vertex) +
      k\cdot\probabilityFunction(\vertex)+1 - k -
      \probabilityFunction(\vertex) + k \cdot
      \probabilityFunction(\vertex)) =
      1-(1-\probabilityFunction(\vertex)+k\cdot\probabilityFunction(\vertex))
      = \probabilityFunction(\vertex) \cdot (1-k)$. By induction
      hypothesis, this is equivalent to $\probabilityFunction(\vertex)
      \cdot \values<\probability>(\vertex<\ast>)$, which matches the
      definition of $\values<\probability>$ for triggerable basic events.
    \end{itemize}
  \item $\typeFunction(\vertex)=\opAnd$: Analogously to the proof for
    triggerable basic events. However, the induction hypothesis needs
    to be used on both subtrees.
  \item $\typeFunction(\vertex)=\opOr$: Analogously to the proof for
    triggerable basic events. However, the induction hypothesis needs
    to be used for both subtrees and the proofs of the two claims need
    to be swapped.
  \end{itemize}
\end{proof}
\pagebreak
\section{Correctness of PAC Analysis}
\label{app:pac}
\begin{theorem}[Correctness of \PAC{} Analysis]
	\label{theo:pac}
	\noindent Let $\exampleADD$ be an attack-defense tree and let $\values[\basicEvents]$ be a \PAC{} valuation for basic events, i.e., for every basic event~$\basicEvent \in \basicEvents$, it holds $\values[\basicEvents](\basicEvent)$ is $(\error[\basicEvent],\prob[\basicEvent])$-\PAC{} for some $\error[\basicEvent] \in \RR_{\geq 0}$ and  $\prob[\basicEvent] \in [0,1]$. Let $(\error[\topGoal],\prob[\topGoal])$ be the error and the probability computed for the goal $\topGoal$. For any valuation $\values[\basicEvents]<\ast>$ such that $\values[\basicEvents]<\ast>(\basicEvent) \in \values[\basicEvents](\basicEvent) \pm \error[\basicEvent]$, it holds \mbox{$\values<\ast>(\topGoal) \in \values(\topGoal) \pm
		\error$}.
\end{theorem}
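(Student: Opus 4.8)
The plan is to prove the \emph{soundness} (over-approximation) direction of the interval propagation --- perturbing every basic event inside its own $\error[\basicEvent]$-ball keeps the root value inside the $\error$-ball computed by the bottom-up pass --- by structural induction on the tree (we work with trees, as in \Fref{app:correctness}; on a DAG one simply follows a topological order). I would strengthen the statement so that it holds at \emph{every} vertex $\vertex$: if $\values[\basicEvents]<\ast>(\basicEvent) \in \values[\basicEvents](\basicEvent) \pm \error[\basicEvent]$ for all basic events $\basicEvent$ below $\vertex$, then $\values<\ast>(\vertex) \in \values(\vertex) \pm \error[\vertex]$, where $\values(\vertex)$ and $\error[\vertex]$ are the point value and the error bound the pass assigns to $\vertex$. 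The base case $\vertex \in \basicEvents$ is immediate, since there $\values<\ast>(\vertex)$, $\values(\vertex)$ and $\error[\vertex]$ are literally the input data.

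The heart of the inductive step is a small interval-arithmetic lemma. For each primitive operation occurring in the interpretations --- $+$, $\cdot$, $\max$, $\min$ on $\RR_{\geq 0}$, together with the derived maps $x \mapsto 1-x$ and $(x_1,x_2) \mapsto x_1 + x_2 - x_1 x_2$ on $[0,1]$ --- one shows: if the arguments range over $x_i \pm \error[i]$, then the result ranges inside the interval whose centre is the operation applied to the $x_i$ and whose radius is exactly the one prescribed by the matching rule (\subscript{R}{1}--\subscript{R}{5}, together with the evident rule $\error = \error[1] + \error[2]$ for $+$). This follows because each of these operations is coordinatewise monotone on the relevant domain --- non-decreasing for $+$, $\cdot$, $\max$, $\min$ and for $x_1 + x_2 - x_1 x_2$ (its partial derivative in $x_i$ being $1 - x_{3-i} \geq 0$ on $[0,1]$), non-increasing for $x \mapsto 1-x$ --- so the extreme outputs over a box of inputs are attained at the corners $x_i \pm \error[i]$, and a one-line computation bounds the deviation of each corner value from the centre by the stated radius; e.g.\ $(x_1+\error[1])(x_2+\error[2]) - x_1 x_2 = x_1\error[2] + x_2\error[1] + \error[1]\error[2]$ for $\cdot$. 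With this lemma the inductive step is routine: at a gate $\vertex$ the value $\values<\ast>(\vertex)$ is $\interpretation(\vertex)$ applied to the $\values<\ast>$-values of the inputs, which by the induction hypothesis lie in the computed intervals of those inputs, whence by the lemma $\values<\ast>(\vertex) \in \values(\vertex) \pm \error[\vertex]$ --- using rule \subscript{R}{2} for $\opAnd$, \subscript{R}{3} for $\opOr$, \subscript{R}{1} for $\opNeg$; the identity handles $\opTrigger$ and triggerable basic events reduce to the $\opAnd$ case, and the pair-valued cost and delay interpretations of \Fref{app:examplesBottomUp} are treated componentwise.

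The one place that needs care --- and the main, though modest, obstacle --- is the truncation folded into the notation $x \pm \error := [\max(0, x-\error),\, x+\error]$ (and $[\max(0, x-\error),\, \min(x+\error,1)]$ for probabilities). One must check that intersecting the input intervals with $[0,1]$ (resp.\ $\RR_{\geq 0}$) only shrinks them, so the monotonicity argument still pins the extreme outputs to the (possibly clipped) corners, and that the interval emitted by each rule --- which is an over-approximation rather than the exact image, most visibly for $\opOr$, where rule \subscript{R}{3} bounds both one-sided deviations by the single quantity $\error[1] + \error[2] + x_1\error[2] + x_2\error[1] + \error[1]\error[2]$ --- still contains that exact image. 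Since the statement only asks for containment and not for tightness (the converse, tightness direction would instead require exhibiting witness valuations in the input boxes), this over-approximation is harmless and the induction closes.
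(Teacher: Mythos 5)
Your proposal is correct and follows essentially the same route as the paper's proof: establish per-operation interval-containment bounds for $1-x$, $\cdot$, $+$, $x+y-xy$, $\max$, $\min$, and then close by structural induction on the tree. Your additional remarks on coordinatewise monotonicity, the clipping in $x\pm\error$, and the harmlessness of over-approximating the exact image are slightly more explicit than the paper's treatment but do not change the argument.
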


\begin{proof}
	
	Instead of proving the claim directly, we show that for any two \PAC{} values~$x,y \in \RR_{\geq 0}$ (with $\error[x],\prob[x]$ and $\error[y],\prob[y])$, respectively), it holds: Let $x^\ast \in x\pm\error[x]$ and $y^\ast \in y\pm\error[y]$. Then
	\begin{enumerate}
		\item $(1-x^\ast) \in (1-x)\pm\error[x]$: $x^\ast$ is at most $\error[x]$ away from $x$, i.e. the values range from
		$(1-(x+\error[x]))=(1-x)-\error[x]$ to $(1-(x-\error[x])) = (1-x) + \error[x]$.
		
		\item
		$x^\ast \cdot y^\ast \in (x \cdot y) \pm x \cdot \error[y] + y \cdot \error[x] + \error[x]\cdot\error[y]$: This claim follows by $(x \pm\error[x])\cdot(y\pm\error[y]) = y\cdot x \pm \error[x]
		\cdot y \pm y \cdot \error[x] \pm \error[x] \cdot \error[y]$.\footnote{While we can bound the error from above
			clearly to $\error[1] \cdot x_2 + x_2 \cdot \error[2] + \error[1] \cdot \error[2]$, the respective term
			$\error[1] \cdot x_2 - x_2 \cdot \error[2] - \error[1] \cdot \error[2]$ for a bound from below is not correct. Nevertheless,  we slightly over-estimate the true error from below to keep a centered interval under any circumstances and the claim still holds.}
		\item $x^\ast + y^\ast \in x+y\pm(\error[x]+\error[y])$: The claim holds since  $(x+\error[x])+(y+\error[y]) = x+y + (\error[x],\error[y])$ and  $(x-\error[x])+(y-\error[y]) = x+y - (\error[x],\error[y])$.
		\item $x^\ast + y^\ast - x^\ast \cdot y^\ast \in (x + y - x \cdot y)  \pm \error[x] + \error[y] + x \cdot \error[y] + y \cdot \error[x] + \error[x]\cdot\error[y]$: This case combines case~2 and case~3.
		\item
		$\max(x^\ast,y^\ast) \in \max(x,y) \pm \max(\error[x],\error[y])$: Clearly, $\max(x,y) + \max(\error[x],\error[y])$ is the largest value the term can attain. Since $\error[x],\error[y] \geq 0$,   $\max(x,y) - \max(\error[x],\error[y])$ is the smallest value the term can attain. Thus, the biggest error from the maximal value is given by~$\max(\error[x],\error[y])$. \footnote{Using assumptions  on the distribution of the error, more precise estimates are possible.\label{fn:assumptions}}
		\item
		$\min(x^\ast,y^\ast) \in \min(x,y) \pm \max(\error[x],\error[y])$: Clearly, $\min(x,y) - \max(\error[x],\error[y])$ is the smallest value the term can attain. Since $\error[x],\error[y] \geq 0$,    $\min(x,y) + \max(\error[x],\error[y])$ is the largest value the term can attain. Thus, the biggest error from the minimal value is given by~$\max(\error[x],\error[y])$.\footref{fn:assumptions}
	\end{enumerate}
	The result of \Fref{theo:pac} then follows from a straightforward induction on the structure of the tree.
\end{proof}




\newpage
\addcontentsline{toc}{chapter}{Bibliography}
\printbibliography

@InProceedings{	  al2019,
  title		= {Attack tree Model for Potential Attacks Against the SCADA
		  System},
  author	= {AL-Dahasi, A Ezaz Mohammed and Saqib, B Nazar Abbas},
  booktitle	= {2019 27th Telecommunications Forum (TELFOR)},
  pages		= {1--4},
  year		= {2019},
  organization	= {IEEE}
}

@InProceedings{	  beckers2014,
  title		= {Determining the probability of smart grid attacks by
		  combining attack tree and attack graph analysis},
  author	= {Beckers, Kristian and Heisel, Maritta and Krautsevich,
		  Leanid and Martinelli, Fabio and Meis, Rene and
		  Yautsiukhin, Artsiom},
  booktitle	= {International Workshop on Smart Grid Security},
  pages		= {30--47},
  year		= {2014},
  organization	= {Springer}
}

@InProceedings{	  byres2004,
  title		= {The use of attack trees in assessing vulnerabilities in
		  SCADA systems},
  author	= {Byres, Eric J and Franz, Matthew and Miller, Darrin},
  booktitle	= {Proceedings of the international infrastructure
		  survivability workshop},
  pages		= {3--10},
  year		= {2004},
  organization	= {Citeseer}
}

@InProceedings{	  fila2019,
  title		= {Attack--defense trees for abusing optical power meters: A
		  case study and the OSEAD tool experience report},
  author	= {Fila, Barbara and Wide{\l}, Wojciech},
  booktitle	= {International Workshop on Graphical Models for Security},
  pages		= {95--125},
  year		= {2019},
  organization	= {Springer}
}

@InProceedings{	  fraile2016,
  author	= {Marlon Fraile and Margaret Ford and Olga Gadyatskaya and
		  Rajesh Kumar and Mari{\"{e}}lle Stoelinga and Rolando
		  Trujillo-Rasua},
  title		= {Using Attack-Defense Trees to Analyze Threats and
		  Countermeasures in an {ATM:} {A} Case Study},
  booktitle	= {The Practice of Enterprise Modeling (IFIP)},
  pages		= {326--334},
  year		= {2016},
  url		= {https://doi.org/10.1007/978-3-319-48393-1\_24},
  doi		= {10.1007/978-3-319-48393-1\_24},
  timestamp	= {Fri, 02 Nov 2018 09:44:44 +0100},
  biburl	= {https://dblp.org/rec/bib/conf/ifip8-1/FraileFGKST16},
  bibsource	= {dblp computer science bibliography, https://dblp.org}
}

@Article{	  houmer2020,
  title		= {A risk and security assessment of VANET availability using
		  attack tree concept.},
  author	= {Houmer, Meriem and Hasnaoui, Moulay Lahcen},
  journal	= {International Journal of Electrical \& Computer
		  Engineering (2088-8708)},
  volume	= {10},
  number	= {6},
  year		= {2020}
}

@InCollection{	  kim2020,
  title		= {Security and Privacy in Intelligent Autonomous Vehicles},
  author	= {Kim, Shiho and Shrestha, Rakesh},
  booktitle	= {Automotive Cyber Security},
  pages		= {35--66},
  year		= {2020},
  publisher	= {Springer}
}

@Article{	  kim2020b,
  title		= {Risk management-based security evaluation model for
		  telemedicine systems},
  author	= {Kim, Dong-won and Choi, Jin-young and Han, Keun-hee},
  journal	= {BMC Medical Informatics and Decision Making},
  volume	= {20},
  number	= {1},
  pages		= {1--14},
  year		= {2020},
  publisher	= {BioMed Central}
}

@InProceedings{	  krichen2019,
  title		= {A New Model-based Framework for Testing Security of IoT
		  Systems in Smart Cities using Attack Trees and Price Timed
		  Automata},
  author	= {Krichen, Moez and Alroobaea, Roobaea},
  booktitle	= {Proceedings of the 14th International Conference on
		  Evaluation of Novel Approaches to Software Engineering},
  pages		= {570--577},
  year		= {2019},
  organization	= {SCITEPRESS-Science and Technology Publications, Lda}
}

@InProceedings{	  lopez2012,
  title		= {Using Attack Trees to Assess Security Controls for
		  Supervisory Control and Data Acquisition Systems (SCADA)},
  author	= {Lopez, J and Nielsen, Jason and Hemmes, Jeffrey and
		  Humphries, Jeffrey},
  booktitle	= {Proceedings of the 7th International Conference on
		  Information Warfare and Security},
  pages		= {166--177},
  year		= {2012}
}

@InProceedings{	  mcqueen2006,
  author	= {McQueen, Miles A. and Boyer, Wayne F. and Flynn, Mark A.
		  and Beitel, George A.},
  title		= {Quantitative Cyber Risk Reduction Estimation Methodology
		  for a Small SCADA Control System},
  booktitle	= {Hawaii International Conference on System Sciences
		  (HICSS)},
  series	= {HICSS '06},
  year		= 2006,
  isbn		= {0-7695-2507-5},
  pages		= {226--},
  url		= {http://dx.doi.org/10.1109/HICSS.2006.405},
  doi		= {10.1109/HICSS.2006.405},
  acmid		= 1110092,
  publisher	= {IEEE Computer Society},
  address	= {Washington, DC, USA}
}

@TechReport{	  nato2008,
  title		= {Improving Common Security Risk Analysis},
  author	= {NATO, RTO},
  year		= {2008},
  institution	= {RTO Technical Report TR-IST-049, Research and Technology
		  Organisation of NATO}
}

@Article{	  ohm2020,
  title		= {Backstabber's Knife Collection: A Review of Open Source
		  Software Supply Chain Attacks},
  author	= {Ohm, Marc and Plate, Henrik and Sykosch, Arnold and Meier,
		  Michael},
  journal	= {arXiv preprint arXiv:2005.09535},
  year		= {2020}
}

@Article{	  ramos2020,
  title		= {Assessing Vulnerabilities in IoT-Based Ambient Assisted
		  Living Systems},
  author	= {Ramos, JLH and Skarmeta, A},
  journal	= {Security and Privacy in the Internet of Things: Challenges
		  and Solutions},
  volume	= {27},
  pages		= {94},
  year		= {2020},
  publisher	= {IOS Press}
}

@Article{	  roblesmodel,
  title		= {Model-based Cybersecurity Engineering for Connected and
		  Automated Vehicles: The FLOURISH Project},
  author	= {Robles-Ramirez, David and Tryfonas, Theo and
		  Escamilla-Ambrosio, Ponciano J and Fagade, Tesleem and
		  Anastasopoulou, Kalliopi and Tassi, Andrea and Piechocki,
		  Robert}
}

@Article{	  rosmansyah2020,
  title		= {Impersonation Attack-Defense Tree},
  author	= {Rosmansyah, Yusep and Hendarto, Ignatius and Pratama,
		  Demby},
  journal	= {International Journal of Emerging Technologies in Learning
		  (iJET)},
  volume	= {15},
  number	= {19},
  pages		= {239--246},
  year		= {2020},
  publisher	= {International Journal of Emerging Technology in Learning}
}

@InProceedings{	  schurmann2020,
  title		= {Effective Cybersecurity Awareness Training for Election
		  Officials},
  author	= {Sch{\"u}rmann, Carsten and Jensen, Lisa Hartmann and
		  Sigbj{\"o}rnsd{\'o}ttir, R{\'o}sa Mar{\'\i}a},
  booktitle	= {International Joint Conference on Electronic Voting},
  pages		= {196--212},
  year		= {2020},
  organization	= {Springer}
}

@MastersThesis{	  soligo2020,
  title		= {Secure deployment of HTTPS: Analysis and open challenges},
  author	= {Soligo, Lorenzo},
  type		= {{B.S.} thesis},
  year		= {2020},
  school	= {Universit{\`a} Ca'Foscari Venezia}
}

@InProceedings{	  ten2007,
  title		= {Vulnerability assessment of cybersecurity for SCADA
		  systems using attack trees},
  author	= {Ten, Chee-Wooi and Liu, Chen-Ching and Govindarasu,
		  Manimaran},
  booktitle	= {2007 IEEE Power Engineering Society General Meeting},
  pages		= {1--8},
  year		= {2007},
  organization	= {IEEE}
}

@InProceedings{	  andre2019,
  title		= {Parametric analyses of attack-fault trees},
  author	= {Andr{\'e}, {\'E}tienne and Lime, Didier and Ramparison,
		  Mathias and Stoelinga, Mari{\"e}lle},
  booktitle	= {2019 19th International Conference on Application of
		  Concurrency to System Design (ACSD)},
  pages		= {33--42},
  year		= {2019},
  organization	= {IEEE}
}

@InProceedings{	  arnold2014,
  author	= {F. Arnold and H. Hermanns and R. Pulungan and
		  M.I.A.~Stoelinga},
  title		= {Time-Dependent Analysis of Attacks},
  booktitle	= {{POST}},
  pages		= {285--305},
  series	= {LNCS},
  volume	= {8414},
  publisher	= {Springer},
  year		= {2014}
}

@InProceedings{	  aslanyan2016,
  author	= {Zaruhi Aslanyan and Flemming Nielson and David Parker},
  title		= {Quantitative Verification and Synthesis of Attack-Defence
		  Scenarios},
  booktitle	= {Computer Security Foundations Symposium (CSF)},
  pages		= {105--119},
  year		= {2016},
  url		= {https://doi.org/10.1109/CSF.2016.15},
  doi		= {10.1109/CSF.2016.15},
  timestamp	= {Fri, 02 Nov 2018 09:41:26 +0100},
  biburl	= {https://dblp.org/rec/bib/conf/csfw/AslanyanNP16},
  bibsource	= {dblp computer science bibliography, https://dblp.org}
}

@Article{	  buldas2020,
  title		= {Attribute evaluation on attack trees with incomplete
		  information},
  author	= {Buldas, Ahto and Gadyatskaya, Olga and Lenin, Aleksandr
		  and Mauw, Sjouke and Trujillo-Rasua, Rolando},
  journal	= {Computers \& Security},
  volume	= {88},
  pages		= {101630},
  year		= {2020},
  publisher	= {Elsevier}
}

@InProceedings{	  buoni2010,
  title		= {A Delphi-based approach to fraud detection using attack
		  trees and fuzzy numbers},
  author	= {Buoni, Alessandro and Fedrizzi, Mario and Mezei,
		  J{\'o}zsef},
  booktitle	= {Proceeding of the IASK International Conferences},
  pages		= {21--28},
  year		= {2010}
}

@InProceedings{	  eisentraut2019,
  title		= {Expected Cost Analysis of Attack-Defense Trees},
  author	= {Eisentraut, Julia and K{\v{r}}et{\'\i}nsk{\`y}, Jan},
  booktitle	= {International Conference on Quantitative Evaluation of
		  Systems},
  pages		= {203--221},
  year		= 2019,
  organization	= {Springer}
}

@InBook{	  gadyatskaya2016,
  author	= "Gadyatskaya, Olga and Jhawar, Ravi and Kordy, Piotr and
		  Lounis, Karim and Mauw, Sjouke and Trujillo-Rasua, Rolando",
  title		= "Attack Trees for Practical Security Assessment: Ranking of
		  Attack Scenarios with ADTool 2.0",
  booktitle	= "Quantitative Evaluation of Systems {(QEST)}",
  year		= "2016",
  publisher	= "Springer",
  address	= "Cham",
  pages		= "159--162",
  isbn		= "978-3-319-43425-4",
  doi		= "10.1007/978-3-319-43425-4_10",
  url		= "http://dx.doi.org/10.1007/978-3-319-43425-4_10"
}

@article{ter2021quantitative,
  title={Quantitative security risk modeling and analysis with RisQFLan},
  author={ter Beek, Maurice H and Legay, Axel and Lafuente, Alberto Lluch and Vandin, Andrea},
  journal={computers \& security},
  volume={109},
  pages={102381},
  year={2021},
  publisher={Elsevier}
}

@article{sebastio2013multivesta,
  title={MultiVeStA: Statistical model checking for discrete event simulators},
  author={Sebastio, Stefano and Vandin, Andrea},
  year={2013},
  publisher={IMT Institute for Advanced Studies Lucca}
}

@InProceedings{	  hermanns2016,
  author	= {Holger Hermanns and Julia Kr{\"{a}}mer and Jan Krc{\'{a}}l
		  and Mari{\"{e}}lle Stoelinga},
  title		= {{The Value of Attack-Defence Diagrams}},
  booktitle	= {Principles of Security and Trust {(POST)}},
  pages		= {163--185},
  year		= 2016,
  url		= {http://dx.doi.org/10.1007/978-3-662-49635-0\_9},
  doi		= {10.1007/978-3-662-49635-0\_9},
  timestamp	= {Mon, 22 Aug 2016 18:03:20 +0200},
  biburl	= {http://dblp.uni-trier.de/rec/bib/conf/post/HermannsKKS16},
  bibsource	= {dblp computer science bibliography, http://dblp.org}
}

@InProceedings{	  jhawar2015,
  title		= {Attack trees with sequential conjunction},
  author	= {Jhawar, Ravi and Kordy, Barbara and Mauw, Sjouke and
		  Radomirovi{\'c}, Sa{\v{s}}a and Trujillo-Rasua, Rolando},
  booktitle	= {IFIP International Information Security and Privacy
		  Conference},
  pages		= {339--353},
  year		= {2015},
  organization	= {Springer}
}

@InProceedings{	  kordy2011,
  author	= {Kordy, Barbara and Mauw, Sjouke and Radomirovi\'{c},
		  Sa\v{s}a and Schweitzer, Patrick},
  title		= {Foundations of Attack-Defense Trees},
  booktitle	= {Proceedings of the 7th International Conference on Formal
		  Aspects of Security and Trust},
  series	= {FAST'10},
  year		= {2011},
  isbn		= {978-3-642-19750-5},
  location	= {Pisa, Italy},
  pages		= {80--95},
  numpages	= {16},
  url		= {http://dl.acm.org/citation.cfm?id=1964555.1964561},
  acmid		= {1964561},
  publisher	= {Springer-Verlag},
  address	= {Berlin, Heidelberg}
}

@Article{	  kordy2013,
  author	= {Barbara Kordy and Ludovic Pietre-Cambacedes and Patrick
		  Schweitzer},
  title		= {DAG-Based Attack and Defense Modeling: Don't Miss the
		  Forest for the Attack Trees},
  journal	= {CoRR},
  year		= {2013},
  volume	= {abs/1303.7397},
  url		= {http://arxiv.org/abs/1303.7397},
  timestamp	= {Wed, 05 Nov 2014 15:40:43 +0100},
  biburl	= {http://dblp.uni-trier.de/rec/bib/journals/corr/abs-1303-7397},
  bibsource	= {dblp computer science bibliography, http://dblp.org}
}

@InProceedings{	  kordy2018,
  author	= "Kordy, Barbara and Wide{\l}, Wojciech",
  editor	= "Bauer, Lujo and K{\"u}sters, Ralf",
  title		= "On Quantitative Analysis of Attack--Defense Trees with
		  Repeated Labels",
  booktitle	= "Principles of Security and Trust (POST)",
  year		= "2018",
  publisher	= "Springer International Publishing",
  address	= "Cham",
  pages		= "325--346",
  isbn		= "978-3-319-89722-6"
}

@InProceedings{	  kumar2015,
  author	= {Rajesh Kumar and Enno Ruijters and Mari{\"{e}}lle
		  Stoelinga},
  title		= {Quantitative Attack Tree Analysis via Priced Timed
		  Automata},
  booktitle	= {Formal Modeling and Analysis of Timed Systems (FORMATS)},
  pages		= {156--171},
  year		= {2015},
  url		= {https://doi.org/10.1007/978-3-319-22975-1\_11},
  doi		= {10.1007/978-3-319-22975-1\_11},
  timestamp	= {Fri, 02 Nov 2018 09:50:33 +0100},
  biburl	= {https://dblp.org/rec/bib/conf/formats/KumarRS15},
  bibsource	= {dblp computer science bibliography, https://dblp.org}
}

@InProceedings{	  kumar2017,
  author	= {Rajesh Kumar and Mari{\"{e}}lle Stoelinga},
  title		= {Quantitative Security and Safety Analysis with
		  Attack-Fault Trees},
  booktitle	= {High Assurance Systems Engineering (HASE)},
  pages		= {25--32},
  year		= {2017},
  url		= {https://doi.org/10.1109/HASE.2017.12},
  doi		= {10.1109/HASE.2017.12},
  timestamp	= {Fri, 02 Nov 2018 09:45:24 +0100},
  biburl	= {https://dblp.org/rec/bib/conf/hase/KumarS17},
  bibsource	= {dblp computer science bibliography, https://dblp.org}
}

@InProceedings{	  mauw2005,
  author	= {Mauw, Sjouke and Oostdijk, Martijn},
  title		= {Foundations of Attack Trees},
  booktitle	= {Proceedings of the 8th International Conference on
		  Information Security and Cryptology},
  series	= {ICISC'05},
  year		= {2006},
  location	= {Seoul, Korea},
  pages		= {186--198},
  numpages	= {13},
  url		= {http://dx.doi.org/10.1007/11734727_17},
  doi		= {10.1007/11734727_17},
  acmid		= {2182248},
  publisher	= {Springer-Verlag},
  address	= {Berlin, Heidelberg},
  keywords	= {attack trees, semantics, threat analysis}
}

@Article{	  mezei2011,
  title		= {Combining attack trees and fuzzy numbers in a multi-agent
		  approach to fraud detection},
  author	= {Mezei, Alessandro Buoni; Mario Fedrizzi; Jozsef},
  journal	= {International Journal of Electronic Business},
  volume	= {9},
  number	= {3},
  pages		= {186--202},
  year		= {2011},
  publisher	= {Inderscience Publishers}
}

@InProceedings{	  pekergin2016,
  title		= {Quantitative Attack Tree Analysis: Stochastic Bounds and
		  Numerical Analysis},
  author	= {Pekergin, Nihal and Tan, Sovanna and Fourneau,
		  Jean-Michel},
  booktitle	= {International Workshop on Graphical Models for Security},
  pages		= {119--133},
  year		= {2016},
  organization	= {Springer}
}

@InProceedings{	  ruijters2017,
  title		= {Uniform analysis of fault trees through model
		  transformations},
  author	= {Ruijters, Enno and Schivo, Stefano and Stoelinga,
		  Mari{\"e}lle and Rensink, Arend},
  booktitle	= {2017 Annual Reliability and Maintainability Symposium
		  (RAMS)},
  pages		= {1--7},
  year		= {2017},
  organization	= {IEEE}
}

@Article{	  widel2019,
  title		= {Beyond 2014: Formal Methods for Attack Tree--based
		  Security Modeling},
  author	= {Widel, Wojciech and Audinot, Maxime and Fila, Barbara and
		  Pinchinat, Sophie},
  journal	= {ACM Computing Surveys (CSUR)},
  volume	= 52,
  number	= 4,
  pages		= {1--36},
  year		= 2019,
  publisher	= {ACM New York, NY, USA}
}

@ARTICLE{Gansner93atechnique,
    author = {Emden R. Gansner and Eleftherios Koutsofios and Stephen C. North and Kiem-phong Vo},
    title = {A Technique for Drawing Directed Graphs},
    journal = {IEEE TRANSACTIONS ON SOFTWARE ENGINEERING},
    year = {1993},
    volume = {19},
    number = {3},
    pages = {214--230}
}

@article{lopuhaa2022efficient,
	title={Efficient and generic algorithms for quantitative attack tree analysis},
	author={Lopuha{\"a}-Zwakenberg, Milan and Budde, Carlos E and Stoelinga, Mari{\"e}lle},
	journal={IEEE Transactions on Dependable and Secure Computing},
	year={2022},
	publisher={IEEE}
}

@InProceedings{	  behrmann2006,
  author	= {Behrmann, Gerd and David, Alexandre and Larsen, Kim G. and
		  Hakansson, John and Petterson, Paul and Yi, Wang and
		  Hendriks, Martijn},
  title		= {UPPAAL 4.0},
  booktitle	= {Quantitative Evaluation of Systems (QEST)},
  series	= {QEST '06},
  year		= {2006},
  isbn		= {0-7695-2665-9},
  pages		= {125--126},
  numpages	= {2},
  url		= {http://dx.doi.org/10.1109/QEST.2006.59},
  doi		= {10.1109/QEST.2006.59},
  acmid		= {1173968},
  publisher	= {IEEE Computer Society},
  address	= {Washington, DC, USA}
}

@InProceedings{	  budde2017,
  title		= {JANI: quantitative model and tool interaction},
  author	= {Budde, Carlos E and Dehnert, Christian and Hahn, Ernst
		  Moritz and Hartmanns, Arnd and Junges, Sebastian and
		  Turrini, Andrea},
  booktitle	= {Tools and Algorithms for the
		  Construction and Analysis of Systems (TACAS)},
  pages		= {151--168},
  year		= {2017},
  organization	= {Springer}
}

@InProceedings{	  chen2013,
  author	= {T. Chen and V. Forejt and M. Kwiatkowska and D. Parker and
		  A. Simaitis},
  title		= {{PRISM-games}: A Model Checker for Stochastic Multi-Player
		  Games},
  booktitle	= {Tools and Algorithms for the Construction and Analysis of
		  Systems (TACAS)},
  year		= {2013},
  editor	= {N. Piterman and S. Smolka},
  pages		= {185--191},
  organization	= {},
  publisher	= {Springer},
  series	= {LNCS},
  volume	= {7795},
  address	= {},
  month		= {},
  note		= {},
  key		= {}
}

@InProceedings{	  dehnert2017,
  title		= {A storm is coming: A modern probabilistic model checker},
  author	= {Dehnert, Christian and Junges, Sebastian and Katoen,
		  Joost-Pieter and Volk, Matthias},
  booktitle	= {International Conference on Computer Aided Verification},
  pages		= {592--600},
  year		= {2017},
  organization	= {Springer}
}

@InProceedings{	  hartmanns2014,
  author	= {Arnd Hartmanns and Holger Hermanns},
  editor	= {Erika {\'{A}}brah{\'{a}}m and Klaus Havelund},
  title		= {The Modest Toolset: An Integrated Environment for
		  Quantitative Modelling and Verification},
  booktitle	= {Tools and Algorithms for the Construction and Analysis of
		  Systems (TACAS)},
  series	= {Lecture Notes in Computer Science},
  volume	= {8413},
  pages		= {593--598},
  publisher	= {Springer},
  year		= {2014},
  url		= {https://doi.org/10.1007/978-3-642-54862-8\_51},
  doi		= {10.1007/978-3-642-54862-8\_51},
  timestamp	= {Tue, 14 May 2019 10:00:53 +0200},
  biburl	= {https://dblp.org/rec/conf/tacas/HartmannsH14.bib},
  bibsource	= {dblp computer science bibliography, https://dblp.org}
}

@InProceedings{	  kwiatkowska2011,
  author	= {{M}. {K}wiatkowska and {G}. {N}orman and {D}. {P}arker},
  title		= {{PRISM} 4.0: {V}erification of {P}robabilistic {R}eal-time
		  {S}ystems},
  booktitle	= {Proc. 23rd International Conference on Computer Aided
		  Verification (CAV'11)},
  year		= {2011},
  pages		= {585--591},
  publisher	= {Springer},
  series	= {LNCS},
  volume	= {6806}
}

@Article{	  arimacovid,
  title		= {Application of the ARIMA model on the COVID-2019 epidemic
		  dataset},
  journal	= {Data in Brief},
  volume	= {29},
  author	= {Benvenuto, Domenico and Giovanetti, Marta and Vassallo,
		  Lazzaro and Angeletti, Silvia and Ciccozzi, Massimo},
  year		= {2020},
  issn		= { 2352-3409 },
  doi		= {10.1016/j.dib.2020.105340},
  url		= {http://dx.doi.org/10.1016/j.dib.2020.105340}
}

@Article{	  arimaelectricity,
  author	= {J. {Contreras} and R. {Espinola} and F. J. {Nogales} and
		  A. J. {Conejo}},
  journal	= {IEEE Transactions on Power Systems},
  title		= {ARIMA models to predict next-day electricity prices},
  year		= {2003},
  volume	= {18},
  number	= {3},
  pages		= {1014-1020}
}

@Article{	  arimafinance,
  title		= {A multiscale modeling approach incorporating ARIMA and
		  anns for financial market volatility forecasting},
  journal	= {Journal of Systems Science and Complexity},
  volume	= {27},
  number	= {1},
  pages		= {225-236},
  author	= {Xiao, Yi and Xiao, Jin and Liu, John and Wang, Shouyang},
  year		= {2014},
  issn		= { 1559-7067 1009-6124 },
  doi		= {10.1007/s11424-014-3305-4}
}

@Article{	  arimapacarticle,
  url		= {http://www.jstor.org/stable/1391361},
  author	= {Chris Chatfield},
  journal	= {Journal of Business $\&$ Economic Statistics},
  number	= {2},
  pages		= {121--135},
  publisher	= {[American Statistical Association, Taylor $\&$ Francis,
		  Ltd.]},
  title		= {Calculating Interval Forecasts},
  volume	= {11},
  year		= {1993}
}

@Article{	  arimarental,
  title		= {Forecasting commercial rental values using ARIMA models},
  author	= {McGough, Tony and Tsolacos, Sotiris},
  journal	= {Journal of Property Valuation and Investment},
  year		= {1995},
  publisher	= {MCB UP Ltd}
}

@Book{		  hughes2010,
  title		= {Measurements and their uncertainties: a practical guide to
		  modern error analysis},
  author	= {Hughes, Ifan and Hase, Thomas},
  year		= {2010},
  publisher	= {Oxford University Press}
}

\end{document}